\documentclass[12pt,a4paper]{amsart}%
\usepackage{pict2e}
\usepackage{amsfonts}
\usepackage{amsthm}
\usepackage{amsmath}
\usepackage{amscd}
\usepackage[latin2]{inputenc}
\usepackage{t1enc}
\usepackage[mathscr]{eucal}
\usepackage{indentfirst}
\usepackage{graphicx}
\usepackage{graphics}
\usepackage[margin=2.9cm]{geometry}
\usepackage{tikzexternal}
\usepackage{amssymb}%
\usepackage{url}
\setcounter{MaxMatrixCols}{30}
%TCIDATA{OutputFilter=latex2.dll}
%TCIDATA{Version=5.50.0.2960}
%TCIDATA{LastRevised=Friday, May 23, 2014 23:03:44}
%TCIDATA{<META NAME="GraphicsSave" CONTENT="32">}
%TCIDATA{<META NAME="SaveForMode" CONTENT="1">}
%TCIDATA{BibliographyScheme=Manual}
%BeginMSIPreambleData
\providecommand{\U}[1]{\protect\rule{.1in}{.1in}}
%EndMSIPreambleData
\providecommand{\U}[1]{\protect\rule{.1in}{.1in}}
\numberwithin{equation}{section}
\tikzexternalize

\theoremstyle{plain}
\newtheorem{Th}{Theorem}[section]
\newtheorem{Lemma}[Th]{Lemma}

\newtheorem{Prop}[Th]{Proposition}
\theoremstyle{definition}
\newtheorem{Def}[Th]{Definition}

\newtheorem{Rem}[Th]{Remark}
\newtheorem{?}[Th]{Problem}
\newtheorem{Ex}[Th]{Example}

\begin{document}
\title[Matching measure and the monomer-dimer free energy]{Matching measure, Benjamini--Schramm convergence and the monomer-dimer free energy}
\author[M. Ab\'ert]{Mikl\'os Ab\'ert}
\address{Alfr\'ed R\'enyi Institute of Mathematics \\
H-1053 Budapest \\
Re\'altanoda u. 13-15. \\
Hungary}
\email{karinthy@gmail.com}
\author[P. Csikv\'ari]{P\'eter Csikv\'ari}
\address{Massachusetts Institute of Technology \\
Department of Mathematics \\
Cambridge MA 02139 \& E\"otv\"os Lor\'and University \\
Department of Computer Science \\
H-1117 Budapest \\
P\'azm\'any P\'eter s\'et\'any 1/C \\
Hungary}
\email{peter.csikvari@gmail.com}
\author[T. Hubai]{Tam\'as Hubai}
\address{Alfr\'ed R\'enyi Institute of Mathematics \\
H-1053 Budapest \\
Re\'altanoda u. 13-15. \\
Hungary \& E\"otv\"os Lor\'and University \\
Department of Computer Science \\
H-1117 Budapest \\
P\'azm\'any P\'eter s\'et\'any 1/C \\
Hungary}
\email{htamas@cs.elte.hu}
\thanks{The second author is partially supported by the Hungarian National Foundation
for Scientific Research (OTKA), grant no. K81310. All authors are partially
supported by MTA R\'enyi "Lend\"ulet" Groups and Graphs Research Group.}
%\subjclass[2010]{Primary: 05C??. Secondary: 05C??}
\keywords{Monomer-dimer model, matching polynomial, Benjamini--Schramm convergence, self-avoiding walks.}

\begin{abstract}
We define the \emph{matching measure} of a lattice $L$ as the spectral measure
of the tree of self-avoiding walks in $L$. We connect this invariant to the
monomer-dimer partition function of a sequence of finite graphs converging to
$L$.

This allows us to express the monomer-dimer free energy of $L$ in terms of the
matching measure. Exploiting an analytic advantage of the matching measure
over the Mayer series then leads to new, rigorous bounds on the monomer-dimer
free energies of various Euclidean lattices. While our estimates use only the
computational data given in previous papers, they improve the known bounds significantly.

\end{abstract}
\maketitle

\section{Introduction}

The aim of this paper is to define the matching measure of an infinite lattice
$L$ and show how it can be used to analyze the behaviour of the monomer-dimer
model on $L$. The notion of matching measure has been recently introduced by
the first and second authors, Frenkel and Kun in \cite{ACFK}. There are
essentially two ways to define it: in this paper we take the path of giving a
direct, spectral definition for infinite vertex transitive lattices, using
self-avoiding walks and then connect it to the monomer-dimer model via graph
convergence. Recall that a graph $L$ is \emph{vertex transitive} if for any
two vertices of $L$ there exists an automorphism of $L$ that brings one vertex
to the other.

Let $v$ be a fixed vertex of the graph $L$. A walk in $L$ is
\emph{self-avoiding}, if it touches every vertex at most once. There is a
natural graph structure on the set of finite self-avoiding walks starting at
$v$: we connect two walks if one is a one step extension of the other. The
resulting graph is an infinite rooted tree, called the \emph{tree of
self-avoiding walks} \emph{of }$L$\emph{\ starting at }$v$.

\begin{Def}
\label{saw}Let $L$ be an infinite vertex transitive lattice. The
\emph{matching measure} $\rho_{L}$ is the spectral measure of the tree of
self-avoiding walks of $L$ starting at $v$, where $v$ is a vertex of $L$.
\end{Def}

By vertex transitivity, the definition is independent of $v$. For a more
general definition, also covering lattices that are not vertex transitive, see
Section~\ref{measure}.

To make sense of why we call this the matching measure, we need the notion of
Benjamini--Schramm convergence. Let $G_{n}$ be a sequence of finite graphs. We
say that $G_{n}$ Benjamini--Schramm converges to $L$, if for every $R>0$, the
probability that the $R$-ball centered at a uniform random vertex of $G_{n}$
is isomorphic to the $R$-ball in $L$ tends to $1$ as $n$ tends to infinity.
That is, if by randomly sampling $G_{n}$\ and looking at a bounded distance,
we can not distinguish it from $L$ in probability.

All Euclidean lattices $L$ can be approximated this way by taking sequences of
boxes with side lengths tending to infinity, by bigger and bigger balls in $L$
in its graph metric, or by suitable tori. When $L$ is a Bethe lattice (a
$d$-regular tree), finite subgraphs never converge to $L$ and the usual way is
to set $G_{n}$ to be $d$-regular finite graphs where the minimal cycle length
tends to infinity.

For a finite graph $G$ and $k>0$ let $m_{k}(G)$ be the number of monomer-dimer
arrangements with $k$ dimers (matchings of $G$ using $k$ edges). Let
$m_{0}(G)=1$. Let the \emph{matching polynomial}
\[
\mu(G,x)=\sum_{k}(-1)^{k}m_{k}(G)x^{\left\vert G\right\vert -2k}%
\]
and let $\rho_{G}$, the \emph{matching measure of }$G$ be the uniform
distribution on the roots of $\mu(G,x)$. Note that $\mu(G,x)$ is just a
reparametrization of the monomer-dimer partition function. The matching
polynomial has the advantage over the  partition function that its roots are
bounded in terms of the maximal degree of $G$.

Using previous work of Godsil \cite{god3} we show that $\rho_{L}$ can be
obtained as the thermodynamical limit of the $\rho_{G_{n}}$.

\begin{Th}
\label{konv}Let $L$ be an infinite vertex transitive lattice and let $G_{n}$
Benjamini--Schramm converge to $L$. Then $\rho_{G_{n}}$ weakly converges to
$\rho_{L}$ and $\lim_{n\rightarrow\infty}\rho_{G_{n}}(\{x\})=\rho_{L}(\{x\})$
for all $x\in\mathbb{R}$.
\end{Th}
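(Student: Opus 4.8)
The plan is to use Godsil's path-tree theory to realize the matching measure of a finite graph as an average of spectral measures of finite trees, and then to read off both assertions from the Benjamini--Schramm convergence of those trees: weak convergence by the method of moments, and the convergence of the atoms from a pointwise spectral-convergence theorem, which will be the main obstacle.

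Fix a finite graph $G$ on $n=|V(G)|$ vertices, with degrees bounded by $\Delta$. For a vertex $v$ let $T(G,v)$ be the (finite) tree of self-avoiding walks of $G$ started at $v$, rooted at the trivial walk $o$, and let $\sigma_{G,v}$ be the spectral measure of its adjacency operator at $o$, so that $S_{\sigma_{G,v}}(z)=\langle\delta_o,(zI-A_{T(G,v)})^{-1}\delta_o\rangle$. Since a tree's matching and characteristic polynomials agree, Godsil's path-tree theorem \cite{god3} gives
\[
S_{\sigma_{G,v}}(z)=\frac{\mu(G-v,z)}{\mu(G,z)},\qquad\text{while}\qquad
\mu'(G,z)=\sum_{v\in V(G)}\mu(G-v,z)
\]
is the standard derivative identity for the matching polynomial. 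As $S_{\rho_G}(z)=\tfrac1n\,\mu'(G,z)/\mu(G,z)$, summing over $v$ yields $S_{\rho_G}=\tfrac1n\sum_v S_{\sigma_{G,v}}$, and hence
\[
\rho_G=\frac1n\sum_{v\in V(G)}\sigma_{G,v}.
\]
In other words, $\rho_G$ is exactly the expected spectral measure at the root of the path tree of a uniformly random vertex.

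I would then transport the convergence to the path trees. A self-avoiding walk of length at most $R$ from $v$ never leaves the $R$-ball of $G$ around $v$, so the $R$-ball around $o$ in $T(G,v)$ is determined by the $R$-ball of $G$ around $v$. Consequently, if $G_n\to L$ in the Benjamini--Schramm sense (with uniformly bounded degree, as holds for the standard lattice approximations), the random rooted trees $(T(G_n,o_n),o)$ with $o_n$ uniform Benjamini--Schramm converge to the rooted self-avoiding walk tree $(T(L),o)$ --- a single deterministic tree by vertex transitivity --- still with degrees at most $\Delta$. Weak convergence of $\rho_{G_n}$ to $\rho_L$ now follows by moments: by the display above, the $k$-th moment of $\rho_{G_n}$ is the expected number of closed length-$k$ walks from the root of the random path tree, a quantity depending only on a bounded neighbourhood of $o$, so it converges to the same count in $T(L)$, i.e.\ to the $k$-th moment of the spectral measure $\rho_L$ of the self-avoiding walk tree (Definition~\ref{saw}). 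As every measure in sight is supported in $[-2\sqrt{\Delta-1},2\sqrt{\Delta-1}]$, moment convergence upgrades to weak convergence.

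Finally the atoms. The portmanteau theorem applied to the closed set $\{x\}$ turns the weak convergence into $\limsup_n\rho_{G_n}(\{x\})\le\rho_L(\{x\})$, so the entire content is the reverse inequality, which weak convergence does not supply; this is the main obstacle. Here I would appeal to the pointwise convergence of the spectral measure under Benjamini--Schramm convergence (a theorem of Ab\'ert, Thom and Vir\'ag), applied to the bounded-degree random rooted trees $(T(G_n,o_n),o)$: for each fixed $x$ the expected atom at the root, which is precisely $\rho_{G_n}(\{x\})=\tfrac1n\sum_v\sigma_{G_n,v}(\{x\})$, converges to the atom $\sigma_{T(L),o}(\{x\})=\rho_L(\{x\})$ of the limit. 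The heart of that statement is a L\"uck-type approximation identifying the limit of the normalized multiplicities of $x$ as a root of $\mu(G_n,\cdot)$ with the von Neumann dimension of $\ker(A_{T(L)}-xI)$, which furnishes exactly the missing lower bound; together with the upper bound above it gives $\rho_{G_n}(\{x\})\to\rho_L(\{x\})$ for every $x\in\mathbb{R}$.
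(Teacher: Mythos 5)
Your proposal is correct and follows essentially the same route as the paper: the identity $\rho_G=\frac1n\sum_v\sigma_{G,v}$ via Godsil's path-tree theorem and the derivative identity, moment convergence from the locality of closed walks in the path tree, the Heilmann--Lieb support bound to upgrade to weak convergence, and a L\"uck-type approximation for the normalized multiplicities of the roots of the monic integer polynomial $\mu(G_n,\cdot)$ to handle the atoms. The only cosmetic difference is that the paper invokes Thom's theorem on root measures of monic integer polynomials directly (its Theorem~\ref{Luck}) rather than the Ab\'ert--Thom--Vir\'ag spectral formulation, but your final sentence reduces to exactly that statement, so the argument is sound.
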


So in this sense, the matching measure can be thought of as the `root
distribution of the partition function for the infinite monomer-dimer model',
transformed by a fixed reparametrization.

It turns out that the matching measure can be effectively used as a substitute
for the Mayer series. An important advantage over it is that certain natural
functions can be integrated along this measure even in those cases when the
corresponding series do not converge. We demonstrate this advantage by giving
new, strong estimates on the free energies of monomer-dimer models for
Euclidean lattices, by expressing them directly from the matching measures.

The computation of monomer-dimer and dimer free energies has a long history.
The precise value is known only in very special cases. Such an exceptional
case is the Fisher-Kasteleyn-Temperley formula \cite{fis,kas,tem} for the
dimer model on $\mathbb{Z}^{2}$. There is no such exact result for
monomer-dimer models. The first approach for getting estimates was the use of the
transfer matrix method. Hammersley \cite{ham1,ham2}, Hammersley and Menon
\cite{ham3} and Baxter \cite{bax} obtained the first (non-rigorous) estimates
for the free energy. Then Friedland and Peled \cite{FP} proved the rigorous
estimates $0.6627989727\pm10^{-10}$ for $d=2$ and the range $[0.7653,0.7863]$
for $d=3$. Here the upper bounds were obtained by the transfer matrix method,
while the lower bounds relied on the Friedland-Tverberg inequality. The lower
bound in the Friedland-Peled paper was subsequently improved by newer and
newer results (see e.g. \cite{friegurv}) on Friedland's asymptotic matching
conjecture which was finally proved by L. Gurvits \cite{gur2}. Meanwhile, a
non-rigorous estimate $[0.7833,0.7861]$ was obtained via matrix permanents
\cite{huo}. The most significant improvement was obtained recently by D.
Gamarnik and D. Katz \cite{gam} via their new method which they called
sequential cavity method. They obtained the range $[0.78595,0.78599]$.
\bigskip

Here we only highlight one computational result. More data can be found in
Section~\ref{entropy-function}, in particular, in Table 1. Let $\tilde
{\lambda}(L)$ denote the monomer-dimer free energy of the lattice $L$, and let
$\mathbb{Z}^{d}$ denote the $d$-dimensional hyper-simple cubic lattice.

\begin{Th}
We have
\[
\tilde{\lambda}(\newline\mathbb{Z}^{3})=0.7859659243\pm9.88\cdot
10^{-7},
\]%
\[
\tilde{\lambda}(\newline\mathbb{Z}^{4})=0.8807178880\pm5.92\cdot
10^{-6}.
\]
\[
\tilde{\lambda}(\newline\mathbb{Z}^{5})=0.9581235802\pm4.02\cdot
10^{-5}.
\]
The bounds on the error terms are rigorous.
\end{Th}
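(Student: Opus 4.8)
The plan is to turn the free energy into the integral of a single fixed elementary function against the matching measure $\rho_{\mathbb{Z}^d}$, and then to bound that integral rigorously from a finite list of moments. The bridge is the elementary identity between the matching polynomial and the monomer--dimer partition function. Writing $\mu(G,x)=\prod_i(x-\theta_i)$ with all $\theta_i$ real, the reparametrization $\lambda=-x^{-2}$ gives $\sum_k m_k(G)\lambda^k=x^{-|G|}\mu(G,x)$, and after taking real parts one obtains
\[
\frac{1}{|G|}\ln\sum_k m_k(G)\lambda^k=\frac12\int_{\mathbb R}\ln\!\bigl(1+\lambda t^2\bigr)\,d\rho_G(t).
\]
Specializing to $\lambda=1$ (equivalently, using that $\sum_k m_k(G)$ is dominated by its largest term, so this records the maximal matching entropy per site) the left-hand side converges to $\tilde\lambda(\mathbb{Z}^d)$ along any sequence of boxes or tori $G_n$ that Benjamini--Schramm converge to $\mathbb{Z}^d$. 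First I would invoke Theorem~\ref{konv}: since $\rho_{G_n}\to\rho_{\mathbb{Z}^d}$ weakly while all measures are supported in one fixed compact interval, the integrals pass to the limit and yield the representation
\[
\tilde\lambda(\mathbb{Z}^d)=\int_{\mathbb R} g(t)\,d\rho_{\mathbb{Z}^d}(t),\qquad g(t)=\tfrac12\ln\!\bigl(1+t^2\bigr).
\]

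Next I would reduce the right-hand side to a finite moment problem. By the results of Godsil \cite{god3} the matching measure is the spectral measure of the self-avoiding walk tree, whose adjacency operator has norm at most $R=2\sqrt{2d-1}$ because every vertex has at most $2d-1$ forward neighbours; hence $\rho_{\mathbb{Z}^d}$ is supported in $[-R,R]$. As $\rho_{\mathbb{Z}^d}$ is symmetric, only the even moments $s_{2k}=\int t^{2k}\,d\rho_{\mathbb{Z}^d}$ matter, and each $s_{2k}$ counts the closed tree-like walks of length $2k$ from the root of the self-avoiding walk tree. These counts are precisely the combinatorial data (self-avoiding walk and polygon enumerations in $\mathbb{Z}^d$) tabulated in earlier papers, so a prescribed block $s_0,\dots,s_{2N}$ can be read off rigorously.

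The crucial point is that $g$ cannot be integrated term by term: its Taylor series $\tfrac12\sum_{j\ge1}(-1)^{j+1}t^{2j}/j$ has radius of convergence $1$, whereas $R>1$, so the corresponding Mayer-type expansion diverges. The integral itself is nonetheless finite, since $g$ is bounded and continuous on the compact support --- this is exactly the analytic advantage of the matching measure. To exploit it I would sandwich $g$ between two even polynomials of degree $2N$, constructing $q_-$ and $q_+$ with $q_-(t)\le g(t)\le q_+(t)$ for all $t\in[-R,R]$. This produces the rigorous two-sided bound
\[
\sum_{k=0}^{N} a_k^- s_{2k}\;\le\;\tilde\lambda(\mathbb{Z}^d)\;\le\;\sum_{k=0}^{N} a_k^+ s_{2k},
\]
where $a_k^\pm$ are the coefficients of $q_\pm$ and the gap is at most $\max_{[-R,R]}(q_+-q_-)$. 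Because $g$ is analytic near $[-R,R]$, its only singularities being at $t=\pm i$, best polynomial approximation converges geometrically in $N$ at a rate set by the largest Bernstein ellipse avoiding $\pm i$; so a moderate $N$ already yields error terms of the claimed size.

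The main obstacle is this last step: constructing the sandwiching polynomials $q_\pm$ that are simultaneously near-optimal and \emph{certifiably} correct. Establishing the pointwise inequalities $q_-\le g\le q_+$ over the entire interval $[-R,R]$ --- rather than at sample points only --- and then converting finite-precision arithmetic into guaranteed bounds is where the rigor must really be earned; I would handle it either through a Chebyshev/Markov moment-problem formulation or through explicit remainder estimates for truncated Chebyshev expansions combined with interval arithmetic. I would also need the support bound $[-R,R]$ to hold for the limit measure $\rho_{\mathbb{Z}^d}$ itself and not merely for the finite approximants, which again follows directly from the self-avoiding walk tree description underlying Definition~\ref{saw}.
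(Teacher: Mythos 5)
Your proposal follows essentially the same route as the paper: represent $\tilde{\lambda}(\mathbb{Z}^d)=\int \frac{1}{2}\ln(1+z^2)\,d\rho_{\mathbb{Z}^d}(z)$, use the Heilmann--Lieb support bound $[-2\sqrt{2d-1},2\sqrt{2d-1}]$, approximate the integrand uniformly by a polynomial of degree at most $N$ (the paper uses Remez's algorithm, which is equivalent to your two-sided sandwich), and evaluate the polynomial's integral from the known moments. The only cosmetic difference is that the paper obtains the moments $\mu_{2n}=(-1)^{n+1}a_n$ from the Butera--Pernici Mayer-series coefficients (for $k\le 49$) rather than from self-avoiding-walk tabulations directly, but it notes the closed-walk interpretation you describe as an alternative.
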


Our method allows to get efficient estimates on arbitrary lattices. The
computational bottleneck is the tree of self-avoiding walks, which is famous
to withstand theoretical interrogation.

It is natural to ask what are the actual matching measures for the various
lattices. In the case of a Bethe lattice $\mathbb{T}_{d}$, the tree of
self-avoiding walks again equals $\mathbb{T}_{d}$, so the matching measure of
$\mathbb{T}_{d}$ coincides with its spectral measure. This explicit measure,
called Kesten-McKay measure has density
\[
\frac{d}{2\pi}\frac{\sqrt{4(d-1)-t^{2}}}{d^{2}-t^{2}}\chi_{\{|t|\leq
2\sqrt{d-1}\}}\text{.}%
\]

We were not able to find such explicit formulae for any of the Euclidean
lattices. However, using Theorem~\ref{konv} one can show that the matching
measures of hypersimple cubic lattices admit no atoms.

\begin{Th}
\label{atomless}The matching measures $\rho_{\mathbb{Z}^{d}}$ have no atoms.
\end{Th}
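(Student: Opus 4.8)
The plan is to deduce atomlessness from the behaviour of matching polynomial roots of finite approximations, using the convergence of atoms supplied by Theorem~\ref{konv}. Take $G_n$ to be the $n\times\cdots\times n$ box in $\mathbb{Z}^d$; as noted in the introduction, these Benjamini--Schramm converge to $\mathbb{Z}^d$, and $\mathbb{Z}^d$ is vertex transitive, so Theorem~\ref{konv} applies and gives $\rho_{\mathbb{Z}^d}(\{x\})=\lim_{n\to\infty}\rho_{G_n}(\{x\})$ for every $x\in\mathbb{R}$. Now $\rho_{G_n}$ is by definition the uniform measure on the $|G_n|=n^d$ roots of $\mu(G_n,x)$ counted with multiplicity, so $\rho_{G_n}(\{x\})$ is exactly the multiplicity of $x$ as a root of $\mu(G_n,x)$ divided by $n^d$. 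It therefore suffices to bound the multiplicity of any fixed real root of $\mu(G_n,x)$ by a quantity that is $o(n^d)$, and the limit above then forces $\rho_{\mathbb{Z}^d}(\{x\})=0$.

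The key input I would use is the bound that the multiplicity of any root of $\mu(G,x)$ is at most the path-cover number $p(G)$, i.e.\ the minimum number of vertex-disjoint paths covering $V(G)$. Granting this, I observe that each box $G_n$ is a Cartesian product of paths and hence carries a Hamiltonian path (one simply snakes through the grid), so $p(G_n)=1$ and every root of $\mu(G_n,x)$ is \emph{simple}. Consequently $\rho_{G_n}(\{x\})\le 1/n^d$, and letting $n\to\infty$ yields $\rho_{\mathbb{Z}^d}(\{x\})=0$ for all $x$, which is precisely the claim that $\rho_{\mathbb{Z}^d}$ has no atoms.

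The main obstacle is the path-cover multiplicity bound, which carries all the real content; the Hamiltonicity of the boxes is an easy geometric input by comparison. The route I would take rests on the interlacing theory of matching polynomials of Godsil: for any vertex $v$, the roots of $\mu(G-v,x)$ interlace those of $\mu(G,x)$, so a single vertex deletion changes the multiplicity of a fixed $\theta$ by at most $1$. Upgrading this to paths, one shows that deleting all vertices of one path $P$ decreases the multiplicity of $\theta$ by at most $1$; anchoring on the single path $P_N$, whose matching polynomial has $N$ distinct roots, and inducting on the number of paths in an optimal cover then gives $\operatorname{mult}_\theta\mu(G,x)\le p(G)$. I expect the delicate step to be exactly the statement that removing a whole path costs at most one in multiplicity, which must be extracted from the continued-fraction and interlacing structure of $\mu$ along the path rather than from a single Cauchy-interlacing inequality; this is the point where I would either invoke a known result on matching roots and minimum path covers or reprove it from Godsil's machinery.

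Finally, I note that the argument has considerable slack: we never truly need simplicity, only a multiplicity bound that is $o(n^d)$. Even the crude path cover of $G_n$ by the $n^{d-1}$ lines parallel to a fixed coordinate axis gives $p(G_n)\le n^{d-1}$, hence $\rho_{G_n}(\{x\})\le n^{d-1}/n^d = 1/n\to 0$, which already proves the theorem. This robustness means that any quantitative form of the multiplicity bound, sharp or not, is enough to conclude.
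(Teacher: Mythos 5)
Your proposal is correct and follows essentially the same route as the paper: the paper's Theorem~\ref{atomless-general} combines the atom-convergence statement $\rho_{L}(\{\theta\})=\lim_{n}\operatorname{mult}(G_{n},\theta)/|G_{n}|$ (from Theorem~\ref{Luck}) with exactly the path-cover multiplicity bound you invoke, which is the paper's Lemma~\ref{pathcover}(a), quoted from Godsil and Heilmann--Lieb rather than reproved. Your observation that boxes are Hamiltonian (or coverable by $n^{d-1}$ lines) is just a concrete instance of the paper's hypothesis that $G_{n}$ be coverable by $o(|G_{n}|)$ disjoint paths, so no comparison beyond this is needed.
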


In Section~\ref{density} we prove a more general result which also shows that
for instance, the matching measure of the hexagonal lattice has no atoms. For
some images on the matching measures of $\mathbb{Z}^{2}$ and $\mathbb{Z}^{3}$
see Section~\ref{density}. We expect that the matching measures of all
hypersimple cubic lattices are absolutely continuous with respect to the
Lebesque measure. We also expect that the radius of support of the matching
measure (that is, the spectral radius of the tree of self-avoiding walks)
carries further interesting information about the lattice. Note that the
\emph{growth} of this tree for $\mathbb{Z}^{d}$ and other lattices has been under intense investigation
\cite{alm,dum,HSS}, under the name \emph{connective constant}. \bigskip

The paper is organized as follows. In Section~\ref{measure}, we define the
basic notions and prove Theorem~\ref{konv}. In Section~\ref{entropy-function} we
introduce the entropy function $\lambda_{G}(p)$ for finite graphs $G$ and
related functions, and we gather their most important properties. We also
extend this concept to lattices. In this section we provide the computational
data too. In Section~\ref{density}, we prove Theorem~\ref{atomless}.

\section{Matching measure}

\label{measure}

\subsection{Notations}

This section is about the basic notions and lemmas needed later. Since the
same objects have different names in graph theory and statistical mechanics,
for the convenience of the reader, we start with a short dictionary. \bigskip

\begin{center}%
\begin{tabular}
[c]{|c|c|}\hline
Graph theory & Statistical mechanics\\\hline
vertex & site\\\hline
edge & bond\\\hline
$k$-matching & monomer-dimer arrangement with $k$ dimers\\\hline
perfect matching & dimer arrangement\\\hline
degree & coordination number\\\hline
$d$-dimensional grid ($\mathbb{Z}^{d}$) & hyper-simple cubic lattice\\\hline
infinite $d$-regular tree ($\mathbb{T}_{d}$) & Bethe lattice\\\hline
path & self-avoiding walk\\\hline
\end{tabular}
\bigskip
\end{center}

Throughout the paper, $G$ denotes a finite graph with vertex set $V(G)$ and
edge set $E(G)$. The number of vertices is denoted by $|G|$. For an infinite
graph $L$, we will use the word \emph{lattice}. The \emph{degree} of a vertex
is the number of its neighbors. A graph is called $d$\emph{-regular} if every
vertex has degree exactly $d$. The graph $G-v$ denotes the graph obtained from
$G$ by erasing the vertex $v$ together with all edges incident to $v$.

For a finite or infinite graph $T$, let $l^{2}(T)$ denote the Hilbert space of
square summable real functions on $V(T)$. The \emph{adjacency operator}
$A_{T}:l^{2}(T)\rightarrow l^{2}(T)$ is defined by
\[
(A_{T}f)(x)=\sum\limits_{(x,y)\in E(T)}f(y)\text{ \ \ \ (}f\in l^{2}%
(T)\text{).}%
\]
When $T$ is finite, in the standard base of vertices, $A_{T}$ is a square
matrix, where $a_{u,v}=1$ if the vertices $u$ and $v$ are adjacent, otherwise
$a_{u,v}=0$. For a finite graph $T$, the characteristic polynomial of $A_{T}$
is denoted by $\phi(T,x)=\det(xI-A_{T})$.

A \emph{matching} is set of edges having pairwise distinct endpoints. A
$k$\emph{-matching} is a matching consisting of $k$ edges. A graph is called
\emph{vertex-transitive} if for every vertex pair $u$ and $v$, there exists an
automorphism $\varphi$ of the graph for which $\varphi(u)=v$. 

\subsection{Matching measure and tree of self-avoiding walks}

The \emph{matching polynomial} of a finite graph $G$ is defined as
\[
\mu(G,x)=\sum_{k}(-1)^{k}m_{k}(G)x^{\left\vert G\right\vert -2k},
\]
where $m_{k}(G)$ denotes the number of $k$-matchings in $G$. Let $\rho_{G}$,
the \emph{matching measure} of $G$ be the uniform distribution on the zeros of
the matching polynomial of $G$.

The fundamental theorem for the matching polynomial is the following.

\begin{Th}
[Heilmann and Lieb \cite{hei}]\label{Hei} The roots of the matching polynomial
$\mu(G,x)$ are real, and if the largest degree $D$ is greater than $1$, then
all roots lie in the interval $[-2\sqrt{D-1},2\sqrt{D-1}]$.
\end{Th}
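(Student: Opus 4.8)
The plan is to route both assertions through the \emph{tree of self-avoiding walks} $T = T(G,v)$ — the very object this paper is built around — following Godsil's method. The two ingredients are a divisibility relation asserting that $\mu(G,x)$ divides the matching polynomial of the path tree, and the observation that for a forest the matching polynomial coincides with the characteristic polynomial of the adjacency matrix, so that its roots are real and are controlled by the spectral radius. (An alternative, the original Heilmann--Lieb induction proving reality together with interlacing of $\mu(G-v,x)$ and $\mu(G,x)$, would also work, but the path-tree route is thematically natural here.)

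First I would record the vertex recursion, obtained by splitting matchings according to whether they cover $v$:
\[
\mu(G,x) = x\,\mu(G-v,x) - \sum_{u \sim v}\mu(G-u-v,x).
\]
Dividing by $\mu(G-v,x)$ turns this into the continued-fraction relation $\mu(G,x)/\mu(G-v,x) = x - \sum_{u\sim v}\mu(G-v-u,x)/\mu(G-v,x)$. Fixing the root $\hat v$ of $T=T(G,v)$ and noting that deleting $\hat v$ leaves the disjoint union $\bigsqcup_{u\sim v} T(G-v,u)$, the same relation holds in $T$; using multiplicativity of $\mu$ over disjoint unions and induction on $|G|$ I would prove the key identity $\mu(G-v,x)/\mu(G,x) = \mu(T-\hat v,x)/\mu(T,x)$. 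Combining it with the inductive divisibility $\mu(G-v,x)\mid \mu(T(G-v,u),x)$ then yields $\mu(G,x)\mid \mu(T,x)$. I expect this bookkeeping to be the main obstacle, since it is the only genuinely combinatorial step.

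For the finite forest $T$ the Sachs coefficient theorem degenerates — there are no cycles, so every elementary subgraph is a matching — whence $\mu(T,x)=\phi(T,x)=\det(xI-A_T)$. As $A_T$ is real symmetric, all roots of $\phi(T,x)$, and hence all roots of its divisor $\mu(G,x)$, are real. For the bound, the roots of $\mu(G,x)$ are eigenvalues of $A_T$ and so lie in $[-\|A_T\|,\|A_T\|]$. The path tree has maximum degree at most $D$, since a vertex corresponding to a walk ending at $w$ has at most $\deg_G(w)\le D$ neighbors in $T$. It then suffices to bound the spectral radius of a finite tree of maximum degree at most $D$ by $2\sqrt{D-1}$, which I would do with the Perron test function $\psi(x)=(D-1)^{-\operatorname{dist}(x,\hat v)/2}$: a direct check gives $A_T\psi \le 2\sqrt{D-1}\,\psi$ pointwise, the only nontrivial case being the root, where the inequality reduces to $D\le 2(D-1)$, valid since $D>1$. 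By the Collatz--Wielandt bound the spectral radius is at most $2\sqrt{D-1}$, and since $T$ is bipartite its spectrum is symmetric about $0$; thus every root of $\mu(G,x)$ lies in $[-2\sqrt{D-1},2\sqrt{D-1}]$, completing the proof.
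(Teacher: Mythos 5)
Your proof is correct, and it travels the same road the paper points to: Godsil's tree of self-avoiding walks, reality of the roots from the self-adjointness of the adjacency matrix of a forest, and the $2\sqrt{D-1}$ spectral radius bound for trees of maximum degree $D$. The difference is in the transfer step from the path tree back to $G$. The paper only cites Heilmann--Lieb and remarks after Theorem~\ref{expected} that this circle of ideas yields a proof; its transfer mechanism is the identity $\rho_G=\mathbb{E}_v\,\delta_{(T_v(G),v)}$, i.e.\ matching the moments of the root distribution of $\mu(G,x)$ against averaged closed-walk counts in the path trees and invoking uniqueness of compactly supported measures with prescribed moments. Your transfer is instead the divisibility $\mu(G,x)\mid\mu(T_v(G),x)$, obtained from Godsil's quotient identity by induction. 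For the purpose of proving Heilmann--Lieb your choice is arguably the cleaner one: the moment-matching route implicitly presupposes that the roots are real (otherwise $\rho_G$ is not a priori a measure on $\mathbb{R}$ to which moment uniqueness applies), whereas divisibility delivers reality and the root bound in one stroke, since every root of $\mu(G,x)$ is literally an eigenvalue of the real symmetric matrix $A_{T_v(G)}$. Your Collatz--Wielandt verification with the test function $(D-1)^{-\operatorname{dist}(x,\hat v)/2}$ is the standard and correct way to bound the spectral radius, including the check $D\le 2(D-1)$ at the root. Two small tidying points: the divisibility should be applied componentwise, since for disconnected $G$ the path tree only sees the component of $v$ (reduce to connected $G$ via multiplicativity of $\mu$ over disjoint unions); and for a symmetric matrix all eigenvalues already lie in $[-\|A\|,\|A\|]$, so the appeal to bipartiteness at the end is unnecessary.
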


A walk in a graph is \emph{self-avoiding} if it touches every vertex at most
once. For a finite graph $G$ and a root vertex $v$, one can construct
$T_{v}(G)$, the \emph{tree of self-avoiding walks at }$v$ as follows: its
vertices correspond to the finite self-avoiding walks in $G$ starting at $v$,
and we connect two walks if one of them is a one-step extension of the other.
The following figure illustrates that in general, $T_{v}(G)$ very much depends
on the choice of $v$.

\begin{figure}[h!]
\centering
\tikzsetnextfilename{sawtree}
\begin{tikzpicture}
\begin{scope}[every node/.style={circle, draw, scale=.75, inner sep=.25mm}, xshift=-3.7cm]
\node (1) at (0,0) {1}; \node (2) at (0,.5) {2}; \node (3) at (.5,0) {3}; \node (4) at (0,-.5) {4}; \node (5) at (-.5,0) {5};
\draw (1) -- (2) (1) -- (3) (1) -- (4) (1) -- (5) (2) -- (3) (2) -- (5) (3) -- (4) (4) -- (5);
\end{scope}
\begin{scope}[grow cyclic, rotate=-45, level distance=5mm, inner sep=.25mm,
level 1/.style={sibling angle=-90}, level 2/.style={sibling angle=-75}, level 3/.style={sibling angle=-35},
level 4/.style={sibling angle=-20}, level 5/.style={sibling angle=-3}, level 6/.style={sibling angle=0},
every node/.style={circle, draw, thin, scale=.75}]
\node(1){1}child{node(12){2}child{node(125){5}child[missing]child{node(1254){4}child[missing]child{node(12543){3}
child[missing]child[missing]}}}child{node(123){3}child{node(1234){4}child{node(12345){5}child[missing]child[missing]}
child[missing]}child[missing]}}child{node(13){3}child{node(132){2}child[missing]child{node(1325){5}child[missing]
child{node(13254){4}child[missing]child[missing]}}}child{node(134){4}child{node(1345){5}child{node(13452){2}
child[missing]child[missing]}child[missing]}child[missing]}}child{node(14){4}child{node(143){3}child[missing]
child{node(1432){2}child[missing]child{node(14325){5}child[missing]child[missing]}}}child{node(145){5}
child{node(1452){2}child{node(14523){3}child[missing]child[missing]}child[missing]}child[missing]}}child{node(15){5}
child{node(154){4}child[missing]child{node(1543){3}child[missing]child{node(15432){2}child[missing]child[missing]}}}
child{node(152){2}child{node(1523){3}child{node(15234){4}child[missing]child[missing]}child[missing]}child[missing]}};
\end{scope}
\begin{scope}[grow cyclic, level distance=5mm, inner sep=.25mm, xshift=5cm,
level 1/.style={sibling angle=-120}, level 2/.style={sibling angle=-75}, level 3/.style={sibling angle=-55},
level 4/.style={sibling angle=-30}, level 5/.style={sibling angle=-15}, level 6/.style={sibling angle=0},
every node/.style={circle, draw, thin, scale=.75}]
\node(2){2}child{node(23){3}child{node(234){4}child{node(2345){5}child[missing]child{node(23451){1}child[missing]
child[missing]child[missing]}}child{node(2341){1}child{node(23415){5}child[missing]child[missing]}child[missing]
child[missing]}}child{node(231){1}child{node(2314){4}child[missing]child{node(23145){5}child[missing]child[missing]}}
child{node(2315){5}child{node(23154){4}child[missing]child[missing]}child[missing]}child[missing]}}child{node(21){1}
child{node(213){3}child[missing]child{node(2134){4}child{node(21345){5}child[missing]child[missing]}child[missing]}}
child{node(214){4}child{node(2143){3}child[missing]child[missing]}child{node(2145){5}child[missing]child[missing]}}
child{node(215){5}child{node(2154){4}child[missing]child{node(21543){3}child[missing]child[missing]}}child[missing]}}
child{node(25){5}child{node(251){1}child[missing]child{node(2513){3}child[missing]child{node(25134){4}child[missing]
child[missing]}}child{node(2514){4}child{node(25143){3}child[missing]child[missing]}child[missing]}}child{node(254){4}
child{node(2541){1}child[missing]child[missing]child{node(25413){3}child[missing]child[missing]}}child{node(2543){3}
child{node(25431){1}child[missing]child[missing]child[missing]}child[missing]}}};
\end{scope}
\end{tikzpicture}
\newcommand{\circled}[1]{\raisebox{-1.5pt}{\tikzsetnextfilename{circled#1}\tikz\node[circle, draw, scale=.75, inner sep=.25mm] {#1};}}
\caption{The pyramid graph and its trees of self-avoiding walks starting from \protect\circled1 and \protect\circled2 respectively.}
\label{fig:sawtree}
\end{figure}

Recall that the spectral measure of a (possibly infinite) rooted graph $(T,v)$
is defined as follows. Assume that $T$ has bounded degree. Then the adjacency
operator $A_{T}:l^{2}(T)\rightarrow l^{2}(T)$ is bounded and self-adjoint,
hence it admits a spectral measure $P_{T}(X)$ ($X\subseteq\mathbb{R}$ Borel).
This is a projection-valued measure on $\mathbb{R}$ such that for any
polynomial $F(x)$ we have
\begin{equation}
F(A)=\int F(x)dP_{x}\tag{Sp}%
\end{equation}
where $P_{x}=P((-\infty,x))$. We define $\delta_{(T,v)}$, the \emph{spectral
measure of }$T$\emph{ at }$v$ by
\[
\delta_{(T,v)}(X)=\left\langle P_{T}(X)\chi_{v},P_{T}(X)\chi_{v}\right\rangle
=\left\langle P_{T}(X)\chi_{v},\chi_{v}\right\rangle \text{ \ (}%
X\subseteq\mathbb{R}\text{ Borel)}%
\]
where $\chi_{v}$ is the characteristic vector of $v$. It is easy to check that
$\delta_{(T,v)}$ is a probability measure supported on the spectrum of the
operator $A_{T}$. Also, by (Sp), for all $k\geq0$, the $k$-th moment of
$\delta_{(T,v)}$ equals
\[
\int x^{k}d\delta_{(T,v)}=\left\langle A^{k}\chi_{v},\chi_{v}\right\rangle
=a_{k}(T,v)
\]
where $a_{k}(T,v)$ is the number of returning walks of length $k$ starting at
$v$.

It turns out that the matching measure of a finite graph equals the average
spectral measure over its trees of self-avoiding walks.

\begin{Th}
\label{expected} Let $G$ be a finite graph and let $v$ be a vertex of $G$
chosen uniformly at random. Then
\[
\rho_{G}=\mathbb{E}_{v}\delta_{(T_{v}(G),v)}\text{.}%
\]
Equivalently, for all $k\geq0$, the $k$-th moment of $\rho_{G}$ equals the
expected number of returning walks of length $k$ in $T_{v}(G)$ starting at $v$.
\end{Th}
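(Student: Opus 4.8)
The plan is to reduce the claimed identity of measures to an identity of their Stieltjes transforms (equivalently, of all their moments). Both measures in question are probability measures with compact support: the support of $\rho_G$ lies in $[-2\sqrt{D-1},2\sqrt{D-1}]$ by the Heilmann--Lieb theorem (Theorem~\ref{Hei}), while each $\delta_{(T_v(G),v)}$ is supported on the spectrum of a bounded self-adjoint operator. A compactly supported measure is determined by its moments, so it suffices to prove the moment reformulation stated in the theorem; I would instead verify the equivalent analytic statement that the two measures have the same Cauchy/Stieltjes transform on a neighbourhood of $\infty$, since this keeps the bookkeeping in closed form.

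First I would compute the transform of $\delta_{(T_v(G),v)}$ for a fixed root $v$. Because $G$ is finite, every self-avoiding walk from $v$ has at most $|G|$ vertices, so $T_v(G)$ is a \emph{finite} tree. For a finite graph the $(v,v)$-entry of the resolvent of the adjacency matrix is a ratio of characteristic polynomials,
\[
\int \frac{1}{z-x}\,d\delta_{(T_v(G),v)}(x)=\big\langle (zI-A_{T_v(G)})^{-1}\chi_v,\chi_v\big\rangle=\frac{\phi(T_v(G)-v,z)}{\phi(T_v(G),z)},
\]
a standard cofactor identity. Now $T_v(G)$ is a forest, and for a forest the Sachs coefficient expansion of the determinant carries no cycle contributions, so $\phi=\mu$ in both numerator and denominator. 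Hence the transform of $\delta_{(T_v(G),v)}$ equals $\mu(T_v(G)-v,z)/\mu(T_v(G),z)$.

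The heart of the argument is Godsil's path-tree identity, which I would invoke from \cite{god3}: for every vertex $v$,
\[
\frac{\mu(T_v(G)-v,z)}{\mu(T_v(G),z)}=\frac{\mu(G-v,z)}{\mu(G,z)}.
\]
Granting this, the transform of $\delta_{(T_v(G),v)}$ is $\mu(G-v,z)/\mu(G,z)$, and averaging over a uniform random $v$ gives
\[
\int \frac{1}{z-x}\,d\big(\mathbb{E}_v\delta_{(T_v(G),v)}\big)(x)=\frac{1}{|G|}\sum_{v}\frac{\mu(G-v,z)}{\mu(G,z)}=\frac{1}{|G|}\frac{\mu'(G,z)}{\mu(G,z)},
\]
where the last equality is the differentiation rule $\mu'(G,z)=\sum_{v}\mu(G-v,z)$, immediate from $\sum_v m_k(G-v)=(|G|-2k)m_k(G)$ by counting the uncovered vertices of a $k$-matching. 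Writing $\mu(G,z)=\prod_i(z-\theta_i)$ over its roots, $\mu'/\mu=\sum_i (z-\theta_i)^{-1}$, so the right-hand side is exactly $\int (z-x)^{-1}\,d\rho_G(x)$, the transform of the uniform measure on the roots. Equality of transforms yields $\rho_G=\mathbb{E}_v\delta_{(T_v(G),v)}$.

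The main obstacle is Godsil's identity itself, which is where the self-avoiding-walk combinatorics genuinely enter; I would prove (or cite) it by induction using the fundamental recurrence $\mu(G,z)=z\,\mu(G-v,z)-\sum_{w\sim v}\mu(G-v-w,z)$ together with the self-similar fact that deleting the root of $T_v(G)$ leaves a forest whose components are the path trees $T_w(G-v)$ for the neighbours $w$ of $v$, so that the continued-fraction unfolding of the recurrence on $G$ matches the one on the tree. As a cross-check, the moment reformulation in the statement can be obtained directly: the $k$-th moment of $\rho_G$ is the power sum $\tfrac{1}{|G|}\sum_i\theta_i^k$, which Godsil's theory identifies with the number of closed tree-like walks of length $k$ in $G$, and these are in bijection with the returning walks of length $k$ at the root of $T_v(G)$, reproducing $\mathbb{E}_v a_k(T_v(G),v)$.
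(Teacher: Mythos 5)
Your proof is correct and follows essentially the same route as the paper: both arguments combine Godsil's path-tree identity (Theorem~\ref{saw-tree}) with the facts that $\mu=\phi$ on forests and $\mu'(G,z)=\sum_v\mu(G-v,z)$, then identify the averaged ratio $\frac{1}{|G|}\,\mu'(G,z)/\mu(G,z)$ with the transform of $\rho_G$ and conclude by compact support. The only cosmetic difference is that you phrase the comparison via the Cauchy--Stieltjes transform while the paper expands the same rational functions as moment generating series $x^{-1}\sum_k c_k(u)x^{-k}$; these are the same power series at infinity.
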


In particular, Theorem \ref{expected} gives one of the several known proofs
for the Heilmann-Lieb theorem. Indeed, spectral measures are real and the
spectral radius of a tree with degree bound $D$ is at most $2\sqrt{D-1}$.

To prove Theorem~\ref{expected} we need the following result of Godsil
\cite{god3} which connects the matching polynomial of the original graph $G$
and the tree of self-avoiding walks:

\begin{Th}
\label{saw-tree} \cite{god3} Let $G$ be a finite graph and $v$ be an arbitrary
vertex of $G$. Then
\[
\frac{\mu(G-v,x)}{\mu(G,x)}=\frac{\mu(T_{v}(G)-v,x)}{\mu(T_{v}(G),x)}.
\]

\end{Th}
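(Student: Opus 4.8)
The plan is to show that both ratios satisfy the same recursion and then run an induction on the number of vertices. For a graph $H$ and a vertex $w\in V(H)$ write $F(H,w)=\mu(H-w,x)/\mu(H,x)$, viewed as an element of the field $\mathbb{R}(x)$ (legitimate since $\mu(H,x)$ is monic, hence nonzero). The identity to be proved is then simply $F(G,v)=F(T_v(G),v)$.

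First I would record the vertex-deletion recurrence for the matching polynomial: for any vertex $w$ with neighbours $u_1,\dots,u_d$,
\[
\mu(H,x)=x\,\mu(H-w,x)-\sum_{i=1}^{d}\mu(H-w-u_i,x),
\]
which follows at once by splitting the $k$-matchings of $H$ according to whether they cover $w$, and if so along which incident edge. Dividing by $\mu(H-w,x)$ converts this into a recursion for the ratio,
\[
\frac{1}{F(H,w)}=x-\sum_{i=1}^{d}F(H-w,u_i),
\]
so that $F(H,w)$ is determined by the quantities $F(H-w,u_i)$ attached to the neighbours of $w$ in the strictly smaller graph $H-w$.

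Next I would identify the recursive structure of the path tree. In $T_v(G)$ the root is the length-$0$ walk at $v$, and its children are the walks $(v,u)$ for $u\sim v$; deleting the root leaves the rooted forest $\bigsqcup_{u\sim v}T_u(G-v)$, because a self-avoiding walk $(v,u,w_1,w_2,\dots)$ in $G$ becomes, after dropping its leading vertex $v$, exactly a self-avoiding walk in $G-v$ starting at $u$, with one-step extensions corresponding on both sides. Since $F$ of a disjoint union only depends on the component containing the chosen vertex, feeding this decomposition into the ratio recursion gives
\[
\frac{1}{F(T_v(G),v)}=x-\sum_{u\sim v}F(T_u(G-v),u).
\]
Comparing this with the recursion for $G$ itself, namely $1/F(G,v)=x-\sum_{u\sim v}F(G-v,u)$, and inducting on $|G|$: as $G-v$ has fewer vertices, the inductive hypothesis yields $F(G-v,u)=F(T_u(G-v),u)$ for each neighbour $u$, the two right-hand sides then agree term by term, and hence $F(G,v)=F(T_v(G),v)$. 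The base case, a single vertex or an isolated $v$, is immediate.

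The routine ingredients here are the matching recurrence and the disjoint-union behaviour of $F$. The step I expect to be the main obstacle is the branch decomposition $T_v(G)-v=\bigsqcup_{u\sim v}T_u(G-v)$: one must verify that the map ``drop the leading vertex $v$'' is genuinely an isomorphism of rooted trees, i.e. that forbidding a walk to return to $v$ coincides exactly with working inside $G-v$, and that the branches belonging to distinct neighbours are never identified. Once this is pinned down, the equality of the two recursions forces the identity of rational functions, and clearing denominators returns the stated polynomial identity.
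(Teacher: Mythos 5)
Your proof is correct. The paper does not prove this statement at all --- it is quoted from Godsil's book \cite{god3} --- and your argument is precisely the standard Godsil induction: the vertex-deletion recurrence $\mu(G,x)=x\mu(G-v,x)-\sum_{u\sim v}\mu(G-v-u,x)$ turns the ratio into a continued-fraction-type recursion, the decomposition $T_v(G)-v\cong\bigsqcup_{u\sim v}T_u(G-v)$ (which does hold, by exactly the ``drop the leading vertex'' bijection you describe, since a self-avoiding walk from $v$ never revisits $v$) makes the two recursions identical, and induction on $|G|$ closes the argument. Nothing to add.
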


We will also use two well-known facts which we gather in the following proposition:

\begin{Prop}
\label{prop} \cite{god3} (a) For any tree or forest $T$, the matching
polynomial $\mu(T,x)$ coincides with the characteristic polynomial $\phi(T,x)$
of the adjacency matrix of the tree $T$:
\[
\mu(T,x)=\phi(T,x).
\]
\medskip

\noindent(b) For any graph $G$, we have
\[
\mu^{\prime}(G,x)=\sum_{v\in V}\mu(G-v,x).
\]

\end{Prop}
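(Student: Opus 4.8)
The plan is to establish the two identities separately, each by a direct combinatorial expansion.

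For part (a), I would expand the characteristic polynomial $\phi(T,x)=\det(xI-A_{T})$ via the Leibniz permutation formula
\[
\det(xI-A_{T})=\sum_{\sigma\in S_{n}}\operatorname{sgn}(\sigma)\prod_{i=1}^{n}(xI-A_{T})_{i,\sigma(i)},
\]
where $n=|T|$, and then track which permutations $\sigma$ contribute a nonzero term. Decomposing $\sigma$ into disjoint cycles, a fixed point $i$ contributes the diagonal factor $x$ (since $A_{T}$ has zero diagonal), while a nontrivial cycle $(i_{1}\,i_{2}\cdots i_{\ell})$ contributes a nonzero factor only if each consecutive pair $\{i_{j},i_{j+1}\}$ is an edge, i.e.\ only if the cycle traces a closed walk in $T$. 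The key structural input is that a tree or forest contains no cycle of length $\ge 3$; hence every nontrivial cycle of a surviving $\sigma$ must be a transposition $(i\,j)$ with $\{i,j\}\in E(T)$. Thus the contributing permutations are exactly those whose nonfixed part is a disjoint union of edge-transpositions, which is precisely the data of a matching of $T$.

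It then remains to read off the sign and size of each such term. A $k$-matching corresponds to a $\sigma$ that is a product of $k$ disjoint transpositions, so $\operatorname{sgn}(\sigma)=(-1)^{k}$; it has $n-2k$ fixed points contributing $x^{n-2k}$, and each transposition $(i\,j)$ contributes $(-A_{i,j})(-A_{j,i})=1$. Summing $(-1)^{k}x^{n-2k}$ over all $k$-matchings and then over $k$ reproduces $\sum_{k}(-1)^{k}m_{k}(T)x^{n-2k}=\mu(T,x)$, proving (a). The only real obstacle here is the bookkeeping of the cycle decomposition; once one observes that acyclicity annihilates all longer cycles, the identity falls out immediately.

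For part (b), I would argue at the level of coefficients. Differentiating $\mu(G,x)=\sum_{k}(-1)^{k}m_{k}(G)x^{n-2k}$ in $x$ gives $\mu'(G,x)=\sum_{k}(-1)^{k}(n-2k)m_{k}(G)x^{n-1-2k}$, while $\sum_{v\in V}\mu(G-v,x)=\sum_{k}(-1)^{k}\left(\sum_{v}m_{k}(G-v)\right)x^{(n-1)-2k}$. Matching coefficients, it suffices to prove $\sum_{v\in V}m_{k}(G-v)=(n-2k)\,m_{k}(G)$. This follows from a double count of the pairs $(M,v)$ where $M$ is a $k$-matching of $G$ and $v$ is a vertex not covered by $M$: summing over $M$ first, each $k$-matching leaves exactly $n-2k$ uncovered vertices, giving $(n-2k)m_{k}(G)$; summing over $v$ first, the matchings $M$ avoiding $v$ are exactly the $k$-matchings of $G-v$, giving $\sum_{v}m_{k}(G-v)$. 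Equating the two counts yields the claim and completes (b).
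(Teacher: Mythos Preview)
Your argument is correct on both counts: the Leibniz expansion combined with the acyclicity of $T$ is the standard route to part~(a), and the double count of pairs $(M,v)$ with $v$ uncovered by $M$ is the standard route to part~(b). There is nothing to compare against here, since the paper does not supply a proof of this proposition; it is stated as a well-known fact with a reference to Godsil's book \cite{god3}, and then used as a black box in the proof of Theorem~\ref{expected}.
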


\begin{proof}
[Proof of Theorem~\ref{expected}]First, let us use part (a) of
Proposition~\ref{prop} for the tree $T_{v}(G)$ and the forest $T_{v}(G)-v$:
\[
\frac{\mu(T_{v}(G)-v,x)}{\mu(T_{v}(G),x)}=\frac{\phi(T_{v}(G)-v,x)}{\phi
(T_{v}(G),x)}.
\]
On the other hand, for any graph $H$ and vertex $u$, we have
\[
\frac{\phi(H-u,x)}{\phi(H,x)}=x^{-1}\sum_{k=0}^{\infty}c_{k}(u)x^{-k},
\]
where $c_{k}(u)$ counts the number of walks of length $k$ starting and ending
at $u$. So this is exactly the moment generating function of the spectral
measure with respect to the vertex $u$. Putting together these with
Theorem~\ref{saw-tree} we see that
\[
\frac{\mu(G-v,x)}{\mu(G,x)}=\frac{\mu(T_{v}(G)-v,x)}{\mu(T_{v}(G),x)}%
=x^{-1}\sum_{k=0}^{\infty}a_{k}(v)x^{-k}%
\]
is the moment generating function of the spectral measure of the tree of
self-avoiding walks with respect to the vertex $v$.

Now let us consider the left hand side of Theorem~\ref{saw-tree}. Let us use
part (b) of Proposition~\ref{prop}:
\[
\mu^{\prime}(G,x)=\sum_{u\in V}\mu(G-u,x).
\]
This implies that
\[
\mathbb{E}_{v}\frac{\mu(G-v,x)}{\mu(G,x)}=\frac{1}{|G|}\frac{\mu^{\prime
}(G,x)}{\mu(G,x)}=x^{-1}\sum_{k=0}^{\infty}\mu_{k}x^{-k},
\]
where
\[
\mu_{k}=\frac{1}{|G|}\sum\lambda^{k},
\]
where the summation goes through the zeros of the matching polynomial. In
other words, $\mu_{k}$ is $k$-th moment of the matching measure defined by the
uniform distribution on the zeros of the matching polynomial. Putting
everything together we see that
\[
\mu_{k}=\mathbb{E}_{v}a_{k}(v).
\]
Since both $\rho_{G}$ and $\mathbb{E}_{v}\rho(v)$ are supported on $\left\{
\left\vert x\right\vert \leq\left\Vert A_{G}\right\Vert \right\}  $, we get
that the two measures are equal.
\end{proof}

Now we define Benjamini--Schramm convergence. 

\begin{Def}
For a finite graph $G$, a finite rooted graph $\alpha$ and a positive integer
$r$, let $\mathbb{P}(G,\alpha,r)$ be the probability that the $r$-ball
centered at a uniform random vertex of $G$ is isomorphic to $\alpha$. We say
that a graph sequence $(G_{n})$ of bounded degree is \emph{Benjamini--Schramm
convergent} if for all finite rooted graphs $\alpha$ and $r>0$, the
probabilities $\mathbb{P}(G_{n},\alpha,r)$ converge. Let $L$ be a vertex
transitive lattice. We say that {\emph{$(G_{n})$ Benjamini-Schramm converges to $L$}},
if for all positive integers $r$, $\mathbb{P}(G_{n},\alpha_{r},r)\rightarrow1$
where $\alpha_{r}$ is the $r$-ball in $L$. 
\end{Def}

\begin{Ex}
Let us consider a sequence of boxes in $\mathbb{Z}^{d}$ where all sides
converge to infinity. This will be Benjamini--Schramm convergent graph
sequence since for every fixed $r$, we will pick a vertex which at least
$r$-far from the boundary with probability converging to $1$. For all these
vertices we will see the same neighborhood. This also shows that we can impose
arbitrary boundary condition, for instance periodic boundary condition means
that we consider the sequence of toroidal boxes. Boxes and toroidal boxes will
be Benjamini--Schramm convergent even together.
\end{Ex}

We prove the following generalization of Theorem~\ref{konv}. 

\begin{Th}
\label{wc} Let $(G_{n})$ be a Benjamini--Schramm convergent bounded degree
graph sequence. Then the sequence of matching measures $\rho_{G_{n}}$ is
weakly convergent. If $(G_{n})$ Benjamini--Schramm converges to the vertex
transitive lattice $L$, then $\rho_{G_{n}}$ weakly converges to $\rho_{L}$ and
$\lim_{n\rightarrow\infty}\rho_{G_{n}}(\{x\})=\rho_{L}(\{x\})$ for all
$x\in\mathbb{R}$. 
\end{Th}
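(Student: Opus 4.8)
The plan is to prove weak convergence via the method of moments, exploiting the fact established in Theorem~\ref{expected} that the moments of the matching measure are expected returning-walk counts. First I would recall that for a bounded degree graph sequence, all the measures $\rho_{G_n}$ are supported on a common compact interval $[-2\sqrt{D-1},2\sqrt{D-1}]$ by the Heilmann--Lieb theorem (Theorem~\ref{Hei}), where $D$ is the uniform degree bound. On a fixed compact set, weak convergence of probability measures is equivalent to convergence of all moments, so it suffices to show that for each fixed $k$, the sequence of $k$-th moments converges. By Theorem~\ref{expected}, the $k$-th moment of $\rho_{G_n}$ equals
\[
\mu_k(G_n)=\mathbb{E}_{v}\, a_k(T_v(G_n),v),
\]
the expected number of length-$k$ closed walks starting at $v$ in the tree of self-avoiding walks rooted at $v$, where $v$ is a uniform random vertex of $G_n$.

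The key observation is that $a_k(T_v(G_n),v)$ is a \emph{local} quantity: a closed walk of length $k$ in $T_v(G_n)$ starting at the root never leaves the ball of radius $k/2$ around $v$, and the structure of $T_v(G_n)$ within that ball is completely determined by the isomorphism type of the $(k/2)$-ball of $G_n$ around $v$. Hence $a_k(T_v(G_n),v)$ is a bounded function (by the degree bound, bounded by $D^k$) depending only on the rooted isomorphism type of a bounded-radius neighborhood of $v$. Therefore I would write
\[
\mu_k(G_n)=\sum_{\alpha} \mathbb{P}(G_n,\alpha,\lceil k/2\rceil)\, a_k(T(\alpha)),
\]
summing over the finitely many rooted isomorphism types $\alpha$ of radius-$\lceil k/2\rceil$ balls occurring with the given degree bound, where $a_k(T(\alpha))$ is the common value of the walk count on any such neighborhood. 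Since Benjamini--Schramm convergence is precisely the statement that each $\mathbb{P}(G_n,\alpha,r)$ converges, and the sum is finite with bounded summands, the $k$-th moment converges. This proves weak convergence of $\rho_{G_n}$ to some limit measure $\rho$.

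When $(G_n)$ converges to the vertex transitive lattice $L$, the probabilities $\mathbb{P}(G_n,\alpha_r,r)\to 1$ for the $r$-ball $\alpha_r$ of $L$, so the limiting $k$-th moment is exactly $a_k(T_v(L),v)$, the number of length-$k$ closed walks at the root of the tree of self-avoiding walks of $L$; by the moment computation preceding Theorem~\ref{expected} this is the $k$-th moment of the spectral measure $\delta_{(T_v(L),v)}=\rho_L$. Matching moments on a common compact support identifies the limit as $\rho_L$. The remaining, and genuinely harder, assertion is the convergence of atoms, $\rho_{G_n}(\{x\})\to\rho_L(\{x\})$ for every $x$, since weak convergence alone only gives $\limsup_n \rho_{G_n}(\{x\})\le\rho_L(\{x\})$ via the portmanteau theorem applied to closed points and in general does not control atoms from below. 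Here I expect the main obstacle, and I would handle it by appealing to the second half of Theorem~\ref{expected} together with Theorem~\ref{konv}'s framework: since each $\rho_{G_n}$ is, up to normalization, the root distribution of the integer polynomial $\mu(G_n,x)$, an atom at $x$ of size $\rho_{G_n}(\{x\})$ corresponds to the multiplicity of $x$ as a matching-polynomial root, and one controls these multiplicities uniformly by relating them through the local walk structure to the atoms of the limiting operator $A_{T_v(L)}$; the precise argument requires showing that eigenvalue multiplicities of the finite matching polynomials cannot drop in the limit, which is the technically delicate point and where I would invest the most care.
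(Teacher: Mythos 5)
Your argument for weak convergence and for identifying the limit as $\rho_L$ is essentially the paper's own proof: the method of moments on the common compact support $[-2\sqrt{D-1},2\sqrt{D-1}]$ supplied by the Heilmann--Lieb theorem, combined with Theorem~\ref{expected} and the locality of the closed-walk counts in the tree of self-avoiding walks (the paper conditions on the $k$-ball rather than the $\lceil k/2\rceil$-ball, which is immaterial). That part is complete and correct.

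The gap is in the final assertion, $\rho_{G_n}(\{x\})\to\rho_L(\{x\})$. You correctly isolate the hard direction --- weak convergence gives only $\limsup_n\rho_{G_n}(\{x\})\le\rho_L(\{x\})$ --- but the strategy you sketch (controlling multiplicities ``through the local walk structure'' and showing they ``cannot drop in the limit'') does not address the actual difficulty. The problem in the $\liminf$ direction is not that multiplicities might drop; it is that mass of $\rho_{G_n}$ could accumulate \emph{near} $x$ without ever sitting \emph{at} $x$, producing an atom of the limit that no finite measure detects. No local or combinatorial argument rules this out, and the statement is genuinely false for general sequences of real-rooted polynomials with uniformly bounded roots (take $n$ distinct roots in $[-1/n,1/n]\setminus\{0\}$: the root measures converge to $\delta_0$ but assign mass $0$ to $\{0\}$). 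The paper's proof rests on an arithmetic input you do not use: the matching polynomials $\mu(G_n,x)$ are \emph{monic with integer coefficients}, and Thom's theorem (Theorem~\ref{Luck}) states that for such a sequence with uniformly bounded roots, weak convergence of the root measures forces convergence of the point masses at every $\theta$. This Diophantine ingredient --- roughly, that algebraic integers with all conjugates bounded cannot crowd arbitrarily close to a point without a definite proportion of them equaling it --- is the missing idea, and without it or an equivalent substitute the atom statement does not follow.
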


\begin{Rem}
The first part of the theorem was first proved in \cite{ACFK}. The proof given
there relied on a general result on graph polynomials given in \cite{csi}. For completeness, we give an
alternate self-contained proof here. 
\end{Rem}

We will use the following theorem of Thom \cite{thom}. See also
\cite{luckapprox} where this is used for Benjamini--Schramm convergent graph
sequences. 

\begin{Th}[Thom]\label{Luck} Let $(q_{n}(z))$ be a sequence of monic polynomials with
integer coefficients. Assume that all zeros of all $q_{n}(z)$ are at most $R$
in absolute value. Let $\rho_{n}$ be the probability measure of uniform
distribution on the roots of $q_{n}(z)$. Assume that $\rho_{n}$ weakly
converges to some measure $\rho$. Then for all $\theta\in\mathbb{C}$ we have
\[
\lim_{n\rightarrow\infty}\rho_{n}(\{\theta\})=\rho(\{\theta\}).
\]

\end{Th}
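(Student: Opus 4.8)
The plan is to prove the claimed equality by establishing its two one-sided bounds separately. The bound $\limsup_{n}\rho_{n}(\{\theta\})\le\rho(\{\theta\})$ is soft: since $\{\theta\}$ is closed, it is immediate from the Portmanteau theorem applied to the weakly convergent sequence $\rho_{n}\to\rho$, and it uses nothing about the integrality of the coefficients. All of the content sits in the reverse inequality $\liminf_{n}\rho_{n}(\{\theta\})\ge\rho(\{\theta\})$, which asserts that an atom of the limit measure cannot be manufactured out of roots that merely accumulate at $\theta$ without ever equalling it. This is false for arbitrary monic polynomial sequences, so this is exactly the step where the integer coefficients must be exploited.

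For $\theta$ algebraic I would reduce the reverse bound to a \emph{clustering estimate}: if $M_{n}(\epsilon)$ denotes the number of roots of $q_{n}$ (with multiplicity) in the punctured disk $0<|z-\theta|\le\epsilon$, then $M_{n}(\epsilon)/\deg q_{n}\le e\log C/\log(1/\epsilon)$, where $e=[\mathbb{Q}(\theta):\mathbb{Q}]$ and $C=C(\theta,R)\ge 1$; the point is that this bound is uniform in $n$ and tends to $0$ as $\epsilon\to0$. To prove it, let $p$ be the minimal polynomial of $\theta$ and $k_{n}$ its multiplicity in $q_{n}$. A Galois automorphism permutes the roots of $q_{n}$, so every conjugate of $\theta$ has the same multiplicity $k_{n}$; hence $p^{k_{n}}\mid q_{n}$ over $\mathbb{Q}$, and therefore over $\mathbb{Z}$ by Gauss's lemma (both being monic and integral). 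Writing $q_{n}=p^{k_{n}}\tilde q_{n}$ with $\tilde q_{n}$ monic integral and $\tilde q_{n}(\theta)\ne0$, the resultant $\mathrm{Res}(p,\tilde q_{n})=\prod_{i}\tilde q_{n}(\theta_{i})$ is a nonzero integer, so $|\mathrm{Res}(p,\tilde q_{n})|\ge1$. Bounding each conjugate factor $|\tilde q_{n}(\theta_{i})|$ by $C^{\deg\tilde q_{n}}$, and noting that for $\epsilon$ below half the minimal distance between conjugates exactly the $M_{n}(\epsilon)$ roots near $\theta$ contribute factors of size $\le\epsilon$ to $|\tilde q_{n}(\theta)|$, yields $1\le\epsilon^{M_{n}(\epsilon)}C^{e\deg q_{n}}$, which rearranges to the asserted estimate.

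The algebraic case then closes: choosing $\epsilon$ with $\rho(\partial D(\theta,\epsilon))=0$, weak convergence gives $(k_{n}+M_{n}(\epsilon))/\deg q_{n}=\rho_{n}(\bar D(\theta,\epsilon))\to\rho(\bar D(\theta,\epsilon))$, so $\liminf_{n}\rho_{n}(\{\theta\})\ge\rho(\bar D(\theta,\epsilon))-e\log C/\log(1/\epsilon)$, and letting $\epsilon\to0$ gives the reverse bound. For transcendental $\theta$ one has $\rho_{n}(\{\theta\})=0$ for every $n$, since roots of integer polynomials are algebraic, so it remains only to show $\rho(\{\theta\})=0$. If instead $\rho(\{\theta\})=a>0$, I would choose, for each small $\epsilon$, a Gaussian rational $\theta'$ with $|\theta-\theta'|<\epsilon$ and $\rho(\{\theta'\})=0$ (possible because the atoms of $\rho$ are countable while $\mathbb{Q}(i)$ is dense). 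Then $\bar D(\theta,\epsilon)\subseteq\bar D(\theta',2\epsilon)$, and the clustering estimate at $\theta'$ (degree $\le2$) together with $\rho_{n}(\{\theta'\})\to0$ (Portmanteau at $\theta'$, where $\rho(\{\theta'\})=0$) forces the proportion of roots of $q_{n}$ in $\bar D(\theta,\epsilon)$ to tend to $0$ as $\epsilon\to0$; this contradicts $\liminf_{n}\rho_{n}(D(\theta,\epsilon))\ge\rho(D(\theta,\epsilon))\ge a$.

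The main obstacle is the clustering estimate, and specifically the step that converts integrality into a separation of roots: the Galois-plus-Gauss argument giving $p^{k_{n}}\mid q_{n}$ over $\mathbb{Z}$, combined with the resultant lower bound $|\mathrm{Res}|\ge1$, is the only place arithmetic enters and is precisely the leverage unavailable to a purely measure-theoretic proof. The routine-but-delicate part I would be most careful about is the bookkeeping of multiplicities and of the exceptional conjugates that may lie inside $\bar D(\theta,\epsilon)$.
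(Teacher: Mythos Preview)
The paper does not prove this statement at all: Theorem~\ref{Luck} is quoted from Thom's paper \cite{thom} (see also \cite{luckapprox}) and used as a black box in the proof of Theorems~\ref{konv} and~\ref{wc}. So there is no ``paper's own proof'' to compare against.

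That said, your outline is essentially the standard proof of Thom's theorem and is correct in its main lines. The split into the Portmanteau upper bound and the arithmetic lower bound is right; the Galois/Gauss argument showing $p^{k_n}\mid q_n$ over $\mathbb{Z}$ and the resultant bound $|\mathrm{Res}(p,\tilde q_n)|\ge 1$ are exactly the places where integrality enters, and your derivation of the clustering estimate $M_n(\epsilon)/\deg q_n\le e\log C/\log(1/\epsilon)$ from them is sound. Two small points worth tightening: first, in the transcendental case you let $\theta'$ vary with $\epsilon$, so you should note explicitly that $C(\theta',R)$ stays bounded (since $|\theta'|\le|\theta|+1$) and that the ``$\epsilon$ below half the conjugate distance'' proviso is harmless (take $\theta'\in\mathbb{Q}$ when $\theta$ is real, and otherwise the conjugate distance $2|\mathrm{Im}\,\theta'|$ is bounded below by $|\mathrm{Im}\,\theta|$ for $\epsilon$ small). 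Second, the phrase ``the proportion of roots of $q_n$ in $\bar D(\theta,\epsilon)$ tends to $0$ as $\epsilon\to 0$'' mixes a limit in $n$ with a limit in $\epsilon$; what you actually obtain is $\limsup_n \rho_n(\bar D(\theta,\epsilon))\le 2\log C/\log(1/(2\epsilon))$, which is enough to contradict $\liminf_n\rho_n(D(\theta,\epsilon))\ge a$ for small $\epsilon$. With these clarifications your argument goes through.
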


\begin{proof}[Proof of Theorem~\ref{konv} and \ref{wc}]
For $k\geq0$ let
\[
\mu_{k}(G)=\int z^{k}\,d\rho_{G}(z)
\]
be the $k$-th moment of $\rho_{G}$. By Theorem~\ref{expected} we have
\[
\mu_{k}(G)=\mathbb{E}_{v}a_{k}(G,v)
\]
where $a_{k}(G,v)$ denotes the number of closed walks of length $k$ of the
tree $T_{v}(G)$ starting and ending at the vertex $v$. 

Clearly, the value of $a_{k}(G,v)$ only depends on the $k$-ball centered at
the vertex $v$. Let $TW(\alpha)=a_{k}(G,v)$ where the $k$-ball centered at $v$
is isomorphic to $\alpha$. Note that the value of $TW(\alpha)$ depends only on the rooted graph $\alpha$ and does not depend on $G$.

Let $\mathcal{N}_{k}$ denote the set of possible $k$-balls in $G$. The size of
$\mathcal{N}_{k}$ and $TW(\alpha)$ are bounded by a function of $k$ and the largest
degree of $G$. By the above, we have
\[
\mu_{k}(G)=\mathbb{E}_{v}a_{k}(G,v)=\sum_{\alpha\in\mathcal{N}_{k}}%
\mathbb{P}(G,\alpha,k)\cdot TW(\alpha).
\]
Since $(G_{n})$ is Benjamini--Schramm convergent, we get that for every fixed
$k$, the sequence of $k$-th moments $\mu_{k}(G_{n})$ converges. The same holds
for $\int q(z)\,d\rho_{G_{n}}(z)$ where $q$ is any polynomial. By the
Heilmann--Lieb theorem, $\rho_{G_{n}}$ is supported on $[-2\sqrt{D-1}%
,2\sqrt{D-1}]$ where $D$ is the absolute degree bound for $G_{n}$. Since every
continuous function can be uniformly approximated by a polynomial on
$[-2\sqrt{D-1},2\sqrt{D-1}]$, we get that the sequence $(\rho_{G_{n}})$ is
weakly convergent. 

Assume that $(G_{n})$ Benjamini--Schramm converges to $L$. Then for all
$k\geq0$ we have $\mathbb{P}(G_{n},\alpha_{k},k)\rightarrow 1$ where
$\alpha_{k}$ is the $k$-ball in $L$, which implies
\[
\lim_{n\rightarrow\infty}\mu_{k}(G_{n})=\lim_{n\rightarrow\infty}\sum
_{\alpha\in\mathcal{N}_{k}}\mathbb{P}(G_n,\alpha,k)\cdot TW(\alpha
)=TW(\alpha_{k})=a_{k}(L,v)
\]
where $v$ is any vertex in $L$. This means that all the moments of $\rho_{L}$
and $\lim\rho_{G_{n}}$ are equal, so $\lim\rho_{G_{n}}=\rho_{L}$. 

Since the matching polynomial is monic with integer coefficients, Theorem
\ref{Luck} gives $\lim_{n\rightarrow\infty}\rho_{G_{n}}(\{x\})=\rho
_{L}(\{x\})$ for all $x\in\mathbb{R}$. 
\end{proof}

\section{The function $\lambda_{G}(p)$}

\label{entropy-function}

Let $G$ be a finite graph, and recall that $|G|$ denotes the number of vertices of
$G$, and $m_{k}(G)$ denotes the number of $k$-matchings ($m_{0}%
(G)=1$). Let $t$ be the activity, a non-negative real number, and
\[
M(G,t)=\sum_{k=0}^{\lfloor|G|/2\rfloor}m_{k}(G)t^{k},
\]
We call $M(G,t)$ the matching generating function or the partition function of
the monomer-dimer model. Clearly, it encodes the same information as the
matching polynomial. Let
\[
p(G,t)=\frac{2t\cdot M^{\prime}(G,t)}{|G|\cdot M(G,t)},
\]
and
\[
F(G,t)=\frac{\ln M(G,t)}{|G|}-\frac{1}{2}p(G,t) \ln(t).
\]
Note that
\[
\tilde{\lambda}(G)=F(G,1)
\]
is called the monomer-dimer free energy.

The function $p=p(G,t)$ is a strictly monotone increasing function which maps
$[0,\infty)$ to $[0,p^{*})$, where $p^{*}=\frac{2\nu(G)}{|G|}$, where $\nu(G)
$ denotes the number of edges in the largest matching. If $G$ contains a
perfect matching, then $p^{*}=1$. Therefore, its inverse function $t=t(G,p)$
maps $[0,p^{*})$ to $[0,\infty)$. (If $G$ is clear from the context, then we will simply write $t(p)$ instead of $t(G,p)$.) Let
\[
\lambda_{G}(p)=F(G,t(p))
\]
if $p<p^{*}$, and $\lambda_{G}(p)=0$ if $p>p^{*}$. Note that we have not
defined $\lambda_{G}(p^{*})$ yet. We simply define it as a limit:
\[
\lambda_{G}(p^{*})=\lim_{p\nearrow p^{*}}\lambda_{G}(p).
\]
We will show that this limit exists, see part (d) of Proposition~\ref{asymp}.
Later we will extend the definition of $p(G,t), F(G,t)$ and $\lambda_{G}(p) $
to infinite lattices $L$.

The intuitive meaning of $\lambda_{G}(p)$ is the following. Assume that we
want to count the number of matchings covering $p$ fraction of the vertices.
Let us assume that it makes sense: $p=\frac{2k}{|G|}$, and so we wish to count
$m_{k}(G)$. Then
\[
\lambda_{G}(p)\approx\frac{\ln m_{k}(G)}{|G|}.
\]
The more precise formulation of this statement will be given in
Proposition~\ref{asymp}. To prove this proposition we need some preparation.
\bigskip

We will use the following theorem of Darroch.

\begin{Lemma}
[Darroch's rule \cite{dar}]Let $P(x)=\sum_{k=0}^{n}a_{k}x^{k}$ be a
polynomial with only positive coefficients and real zeros. If
\[
k-\frac{1}{n-k+2}<\frac{P^{\prime}(1)}{P(1)}<k+\frac{1}{k+2},
\]
then $k$ is the unique number for which $a_{k}=\max(a_{1},a_{2},\dots, a_{n})
$. If, on the other hand,
\[
k+\frac{1}{k+2}<\frac{P^{\prime}(1)}{P(1)}<k+1-\frac{1}{n-k+1},
\]
then either $a_{k}$ or $a_{k+1}$ is the maximal element of $a_{1},a_{2},\dots,
a_{n}$.
\end{Lemma}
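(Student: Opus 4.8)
The plan is to recast the statement probabilistically. Since $P$ has only real zeros and all its coefficients are positive, every zero is negative, so I would factor $P(x)=a_n\prod_{i=1}^n(x+r_i)$ with $r_i>0$. Setting $p_i=\frac{1}{1+r_i}\in(0,1)$ and $q_i=1-p_i$ gives $\frac{P(x)}{P(1)}=\prod_{i=1}^n(q_i+p_ix)$, so that $\frac{a_k}{P(1)}$ is exactly $\Pr[S=k]$ for $S=\sum_{i=1}^n X_i$, a sum of independent Bernoulli$(p_i)$ variables (a Poisson--binomial law). Under this dictionary the hypothesis quantity is the mean: $\frac{P'(1)}{P(1)}=\sum_i p_i=\mathbb{E}[S]=:\mu$. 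Thus the lemma asks to locate the mode of a Poisson--binomial distribution in terms of its mean, and the statement about $\max(a_1,\dots,a_n)$ becomes a statement about $\arg\max_k \Pr[S=k]$.

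Next I would record unimodality. Newton's inequalities for real-rooted polynomials give that $\{a_k\}$ is (ultra-)log-concave, hence the ratios $r_k:=a_{k+1}/a_k$ are strictly decreasing and the sequence $a_0,\dots,a_n$ is unimodal with either a unique maximum or two equal consecutive maxima. Locating the peak therefore amounts to deciding, for each $k$, whether $r_k<1$, $r_k=1$, or $r_k>1$. A convenient reduction is the reversal symmetry $a_j\mapsto a_{n-j}$, which replaces $P(x)$ by $x^nP(1/x)$, sends $k\mapsto n-k$ and $\mu\mapsto n-\mu$; under it the lower bound $\mu>k-\frac{1}{n-k+2}$ turns into an upper bound of the same shape, so it suffices to prove one inequality, say that $\mu<k+\frac{1}{k+2}$ forces $a_k>a_{k+1}$ (equivalently, $a_{k+1}\ge a_k$ forces $\mu\ge k+\frac{1}{k+2}$).

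The heart of the argument --- and the step I expect to be the main obstacle --- is this sharp passage from the aggregate mean $\mu$ to the local comparison of $a_k$ with $a_{k+1}$. I would prove the contrapositive: assuming $a_{k+1}\ge a_k$, bound $\mu=\sum_j j\,a_j/\sum_j a_j$ from below. The decreasing-ratio property propagates $a_{k+1}\ge a_k$ to $a_0\le a_1\le\dots\le a_{k+1}$, which already pushes mass to the right; the difficulty is extracting exactly the constant $\frac{1}{k+2}$. A check that fixes the extremal configuration: taking $n=k+1$ equal factors, so $P(x)=(q+px)^{k+1}$, the boundary $a_{k+1}=a_k$ occurs precisely at $p=\frac{k+1}{k+2}$, giving $\mu=(k+1)p=k+\frac{1}{k+2}$. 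This identifies the binomial $\mathrm{Bin}(k+1,\frac{k+1}{k+2})$ as the equality case and explains the appearance of $k+2$. I would therefore aim to prove the inequality by a smoothing/convexity argument in the variables $\theta_i=p_i/q_i$ (reducing the symmetric condition $e_{k+1}(\theta)\ge e_k(\theta)$, equivalent to $a_{k+1}\ge a_k$, to the all-equal case), or by a direct summation-by-parts estimate on the tail sums $\sum_{j>k}(j-k)a_j$ and $\sum_{j<k}(k-j)a_j$ that uses log-concavity quantitatively rather than just monotonicity. Once the sharp one-sided estimate is in hand, the reversal symmetry yields the second inequality, the two together give the unique-maximum case, and the remaining wider range is exactly where $r_k$ can straddle $1$, leaving $a_k$ or $a_{k+1}$ as the two candidates.
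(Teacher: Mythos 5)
First, a point of comparison: the paper does not prove this lemma at all --- it is quoted from Darroch's paper \cite{dar} and used as a black box --- so there is no in-paper argument to measure yours against. That said, your reformulation is exactly the right (and historically the original) setting: writing $P(x)/P(1)=\prod_i(q_i+p_ix)$ exhibits $a_k/P(1)$ as $\Pr[S=k]$ for a Poisson--binomial variable $S$ with mean $\mu=P'(1)/P(1)=\sum_ip_i$; Newton's inequalities give log-concavity and hence unimodality; the reversal $P(x)\mapsto x^nP(1/x)$ correctly converts the lower bound on $\mu$ into an upper bound of the same shape; and your identification of $\mathrm{Bin}\bigl(k+1,\tfrac{k+1}{k+2}\bigr)$ as the equality configuration is correct.

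The proposal nevertheless has a genuine gap, precisely at the step you yourself flag as the main obstacle: the sharp implication that $a_{k+1}\ge a_k$ forces $\mu\ge k+\tfrac{1}{k+2}$ is never actually proved. You name two candidate strategies (a smoothing/Schur-convexity argument in the variables $\theta_i=p_i/q_i$, or a quantitative summation by parts) but execute neither, and the monotonicity $a_0\le a_1\le\cdots\le a_{k+1}$ that you do derive is, as you concede, insufficient to produce the constant $\tfrac{1}{k+2}$. For what it is worth, the missing inequality has a one-line proof in the extremal situation $n=k+1$: there the constraint $a_{k+1}\ge a_k$ reads $\prod_ip_i\ge\sum_iq_i\prod_{j\ne i}p_j$, which after dividing by $\prod_ip_i$ becomes $\sum_i 1/p_i\le k+2$, and then Cauchy--Schwarz (or AM--HM) gives $\sum_ip_i\ge(k+1)^2\big/\sum_i(1/p_i)\ge(k+1)^2/(k+2)=k+\tfrac{1}{k+2}$. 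But one must still justify the reduction from general $n$ to this case (showing that under the constraint $a_{k+1}\ge a_k$ the extra factors cannot decrease $\mu$ below this bound), and neither that reduction nor any substitute for it appears in your write-up. Until this central estimate is supplied, the argument is a plan rather than a proof.
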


\begin{Prop}
\label{asymp} Let $G$ be a finite graph. \newline(a) Let $nG$ be $n$ disjoint
copies of $G$. Then
\[
\lambda_{G}(p)=\lambda_{nG}(p).
\]
(b) If $p<p^{*}$, then
\[
\frac{d}{dp}\lambda_{G}(p)=-\frac{1}{2}\ln t(p).
\]
(c) The limit
\[
\lambda_{G}(p^*)=\lim_{p\nearrow p^{*}}\lambda_{G}(p)
\]
exists. 
\newline(d) Let $k\leq\nu(G)$ and $p=\frac{2k}{|G|}$. Then
\[
\left|\lambda_{G}(p)-\frac{\ln m_{k}(G)}{|G|}\right|  \leq\frac{\ln|G|}%
{|G|}.
\]
(e) Let $k=\nu(G)$, then for $p^{*}=\frac{2k}{|G|}$ we have
\[
\lambda_{G}(p^{*})=\frac{\ln m_{k}(G)}{|G|}.
\]
(f) If for some function $f(p)$ we have
\[
\lambda_{G}(p)\geq f(p)+o_{|G|}(1)
\]
then
\[
\lambda_{G}(p)\geq f(p).
\]

\end{Prop}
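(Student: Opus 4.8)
The plan is to prove the six parts in the order (a), (b), then the key part (d), and finally (c), (e), (f), since the later parts lean on the earlier ones. For part (a) I would use that matchings of a disjoint union multiply, so $M(nG,t)=M(G,t)^{n}$ while $|nG|=n|G|$; dividing $\ln M(nG,t)=n\ln M(G,t)$ by $n|G|$ shows the activity--density relation is unchanged, $p(nG,t)=p(G,t)$, whence $F(nG,t)=F(G,t)$. As $p^{*}(nG)=\frac{2\nu(nG)}{|nG|}=p^{*}(G)$, the inverses $t(nG,p)$ and $t(G,p)$ agree, so $\lambda_{nG}(p)=\lambda_{G}(p)$. For part (b) I would differentiate $\lambda_{G}(p)=F(G,t(p))$ by the chain rule. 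Since $\frac{\partial}{\partial t}\frac{\ln M}{|G|}=\frac{M'}{|G|M}=\frac{p}{2t}$, the two copies of $\frac{p}{2t}$ appearing in $\partial F/\partial t$ cancel, leaving $\frac{\partial F}{\partial t}=-\frac12\frac{\partial p}{\partial t}\ln t$; multiplying by $\frac{dt}{dp}=\big(\partial p/\partial t\big)^{-1}$ (legitimate since $p(G,t)$ is strictly increasing) gives $\frac{d}{dp}\lambda_{G}(p)=-\frac12\ln t(p)$.

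Part (d) is the heart of the matter, and I expect it to be the main obstacle. The key observation is that $t=t(p)$ with $p=\frac{2k}{|G|}$ is characterized by $\frac{tM'(G,t)}{M(G,t)}=k$. I would then apply Darroch's rule to the tilted polynomial $Q(x)=M(G,tx)=\sum_{j=0}^{\nu}m_{j}t^{j}x^{j}$: its coefficients are positive (since $m_{j}\ge1$ for $0\le j\le\nu(G)$, as subsets of a maximum matching give matchings of every smaller size), and its zeros are real, because the zeros of $M(G,\cdot)$ are exactly $-\theta^{-2}$ for the real nonzero roots $\theta$ of $\mu(G,x)$ provided by the Heilmann--Lieb theorem. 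Since $Q'(1)/Q(1)=\frac{tM'(G,t)}{M(G,t)}=k$ exactly, both Darroch inequalities $k-\frac{1}{\nu-k+2}<k<k+\frac{1}{k+2}$ hold, so $m_{k}t^{k}$ is the unique largest coefficient of $Q$. This yields the sandwich $m_{k}t^{k}\le M(G,t)\le(\nu+1)m_{k}t^{k}$; taking logarithms, dividing by $|G|$, and using $\lambda_{G}(p)=\frac{1}{|G|}\big(\ln M(G,t)-k\ln t\big)$ gives $0\le\lambda_{G}(p)-\frac{\ln m_{k}(G)}{|G|}\le\frac{\ln(\nu+1)}{|G|}\le\frac{\ln|G|}{|G|}$, using $\nu+1\le|G|$.

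For parts (c) and (e) I would compute the limit directly as $p\nearrow p^{*}$, i.e. $t(p)\to\infty$. Expanding $\ln M(G,t)=\ln m_{\nu}+\nu\ln t+O(1/t)$ and $\frac{tM'(G,t)}{M(G,t)}=\nu+O(1/t)$, where $\nu=\nu(G)$, the $\nu\ln t$ contributions cancel in $\lambda_{G}(p)=\frac{1}{|G|}\big(\ln M(G,t)-\frac{tM'}{M}\ln t\big)$ and the remaining error is $O(\ln t/t)\to0$; hence the limit exists, proving (c), and equals $\frac{\ln m_{\nu(G)}(G)}{|G|}$, proving (e) and matching the $k=\nu$ endpoint of (d). Finally, part (f) is immediate from (a): applying the hypothesis to $nG$ gives $\lambda_{G}(p)=\lambda_{nG}(p)\ge f(p)+o_{|nG|}(1)$ for every $n$, and since $|nG|=n|G|\to\infty$ the error term vanishes as $n\to\infty$, so $\lambda_{G}(p)\ge f(p)$.
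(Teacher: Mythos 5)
Your proposal is correct. Parts (a), (b), (f) and the core Darroch argument in (d) coincide with the paper's own proof --- including the observation that $Q(x)=M(G,tx)$ has positive coefficients and real zeros (via Heilmann--Lieb) and that $Q'(1)/Q(1)=k$ exactly places you strictly inside both Darroch windows, so $m_kt^k$ is the unique maximal coefficient. Where you genuinely diverge is in the treatment of the endpoint $p^{*}$. The paper proves (c) by noting from (b) that $\lambda_G$ is eventually monotone decreasing, so the limit is an infimum; it handles the $k=\nu(G)$ case of (d) by running Darroch at values of $p$ slightly below $p^{*}$ and passing to the limit; and it then derives the exact equality in (e) by applying the bound of (d) to $nG$ and letting $n\to\infty$, so that the error $\frac{\ln|nG|}{|nG|}$ vanishes. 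You instead compute the $t\to\infty$ asymptotics of $\ln M(G,t)$ and $tM'/M$ directly, which gives (c) and (e) in one stroke and then yields the $k=\nu(G)$ case of (d) as a corollary of (e); this avoids any circularity since your (e) does not use (d). Your route is somewhat more self-contained at the endpoint: it dispenses with the amplification trick for (e) and also sidesteps the (minor) point that monotonicity alone only guarantees a limit in $[-\infty,\infty)$, whereas the paper's route reuses the $nG$ amplification uniformly across (a), (e), (f) and obtains concavity of $\lambda_G$ as a by-product. Both arguments are valid.
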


\begin{proof}
\noindent(a) Let $nG$ be the disjoint union of $n$ copies of $G$. Note that
\[
M(nG,t)=M(G,t)^{n}%
\]
implying that $p(nG,t)=p(G,t)$ and $\lambda_{nG}(p)=\lambda_{G}(p)$. \medskip

\noindent(b) Since
\[
\lambda_{G}(p)=\frac{\ln M(G,t)}{|G|}-\frac{1}{2}p(G,t) \ln(t)
\]
we have
\[
\frac{d\lambda_{G}(p)}{dp}=\left(  \frac{1}{|G|}\cdot\frac{M^{\prime}%
(G,t)}{M(G,t)}\cdot\frac{dt}{dp}-\frac{1}{2}\left(  \ln(t)+p\cdot\frac{1}%
{t}\cdot\frac{dt}{dp}\right)  \right)  =-\frac{1}{2}\ln(t),
\]
since
\[
\frac{1}{|G|}\cdot\frac{M^{\prime}(G,t)}{M(G,t)}=\frac{p}{2t}%
\]
by definition. \medskip

\noindent(c) From $\frac{d}{dp}\lambda_{G}(p)=-\frac{1}{2}\ln t(p)$ we see that if $p>p(G,1)$, the function $\lambda_G(p)$ is
monotone decreasing. (Note that we also see
that $\lambda_{G}(p)$ is a concave-down function.) Hence
\[
\lim_{p\nearrow p^{*}}\lambda_{G}(p)=\inf_{p>p(G,1)}\lambda_{G}(p).
\]
\medskip

\noindent(d) First, let us assume that $k<\nu(G)$. In case of $k=\nu(G)$, we
will slightly modify our argument. Let $t=t(p)$ be the value for which
$p=p(G,t)$. The polynomial
\[
P(G,x)=M(G,tx)=\sum_{j=0}^{n}m_{j}(G)t^{j}x^{j}%
\]
considered as a polynomial in variable $x$, has only real zeros by
Theorem~\ref{Hei}. Note that
\[
k=\frac{p|G|}{2}=\frac{P^{\prime}(G,1)}{P(G,1)}.
\]
Darroch's rule says that in this case $m_{k}(G)t^{k}$ is the unique maximal
element of the coefficient sequence of $P(G,x)$. In particular
\[
\frac{M(G,t)}{|G|}\leq m_{k}(G)t^{k}\leq M(G,t).
\]
Hence
\[
\lambda_{G}(p)-\frac{\ln|G|}{|G|}\leq\frac{\ln m_{k}(G)}{|G|}\leq\lambda
_{G}(p).
\]
Hence in case of $k<\nu(G)$, we are done. \medskip

If $k=\nu(G)$, then let $p$ be arbitrary such that
\[
k-\frac{1}{2}<\frac{p|G|}{2}<k.
\]
Again we can argue by Darroch's rule as before that
\[
\lambda_{G}(p)-\frac{\ln|G|}{|G|}\leq\frac{\ln m_{k}(G)}{|G|}\leq\lambda
_{G}(p).
\]
Since this is true for all $p$ sufficiently close to $p^{*}=\frac{2\nu
(G)}{|G|}$ and
\[
\lambda_{G}(p^{*})=\lim_{p\nearrow p^{*}}\lambda_{G}(p),
\]
we have
\[
\left|  \frac{\ln m_{k}(G)}{|G|}-\lambda_{G}(p^{*})\right|  \leq\frac{\ln
|G|}{|G|}%
\]
in this case too. \medskip

\noindent(e) By part (a) we have $\lambda_{nG}(p)=\lambda_{G}(p)$. Note also
that if $k=\nu(G)$, then $m_{nk}(nG)=m_{k}(G)^{n}$. Applying the bound from
part (d) to the graph $nG$, we obtain that
\[
\left|  \frac{\ln m_{k}(G)}{|G|}-\lambda_{G}(p^{*})\right|  \leq\frac{\ln
|nG|}{|nG|}.
\]
Since
\[
\frac{\ln|nG|}{|nG|}\to0
\]
as $n\to\infty$, we get that
\[
\lambda_{G}(p^{*})=\frac{\ln m_{k}(G)}{|G|}.
\]
\medskip

\noindent(f) This is again a trivial consequence of $\lambda_{nG}%
(p)=\lambda_{G}(p)$.
\end{proof}

Our next aim is to extend the definition of the function $\lambda_{G}(p)$ for
infinite lattices $L$. We also show an efficient way of computing its values
if $p$ is sufficiently separated from $p^{*}$.

The following theorem was known in many cases for thermodynamic limit.

\begin{Th}
\label{entropy} Let $(G_{n})$ be a Benjamini--Schramm convergent 
sequence of bounded degree graphs. Then the sequences of functions\newline(a)
\[
p(G_{n},t),
\]
(b)
\[
\frac{\ln M(G_{n},t)}{|G_{n}|}%
\]
converge to strictly monotone increasing continuous functions on the interval
$[0,\infty)$. \newline If, in addition, every $G_{n}$ has a perfect matching
then the sequences of functions \newline(c)
\[
t(G_{n},p),
\]
(d)
\[
\lambda_{G_{n}}(p)
\]
are convergent for all $0\leq p<1$.
\end{Th}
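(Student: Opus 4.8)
The plan is to push everything through the matching measure, where the weak convergence of Theorem~\ref{wc} lives, and to obtain parts (a)--(d) as corollaries of two integral representations. First I would record that, since the roots of $\mu(G,x)$ are real and symmetric about the origin (Theorem~\ref{Hei}), the identity $\mu(G,x)/x^{|G|}=M(G,-x^{-2})$ together with the substitution $s=-x^{-2}$ factors the partition function through the squares of the roots. Pairing each root $\lambda$ with $-\lambda$ collapses the product, and taking logarithms gives, for \emph{every} finite graph $G$,
\[
\frac{\ln M(G,t)}{|G|}=\frac12\int\ln(1+tx^2)\,d\rho_{G}(x),\qquad
p(G,t)=\int\frac{tx^2}{1+tx^2}\,d\rho_{G}(x),
\]
the second formula following from the first by differentiating in $t$ and using $p(G,t)=2t\,\frac{d}{dt}\big(\ln M(G,t)/|G|\big)$. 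Roots at $0$ (present exactly when $G$ lacks a perfect matching) contribute nothing to either integrand, so these representations are valid verbatim in all cases.

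Granting these, parts (a) and (b) are essentially immediate. By Heilmann--Lieb all $\rho_{G_n}$ are supported in the fixed compact interval $[-2\sqrt{D-1},2\sqrt{D-1}]$, on which, for each fixed $t\ge0$, the integrands $\ln(1+tx^2)$ and $tx^2/(1+tx^2)$ are bounded and continuous. Theorem~\ref{wc} supplies a weak limit $\rho=\lim\rho_{G_n}$, so both integrals converge pointwise in $t$ to
\[
f_2(t)=\tfrac12\int\ln(1+tx^2)\,d\rho(x),\qquad
f_1(t)=\int\frac{tx^2}{1+tx^2}\,d\rho(x).
\]
Continuity in $t$ follows by dominated convergence, and differentiating under the integral sign shows both limits are strictly increasing provided $\rho\neq\delta_0$; this holds whenever the limit carries edges, since $\int x^2\,d\rho=\lim\overline{\deg}(G_n)>0$ by the moment description in Theorem~\ref{expected}.

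For part (c) the decisive input is that the limit measure has no atom at the origin. Each $G_n$ has a perfect matching, so $\nu(G_n)=|G_n|/2$, the constant term of $\mu(G_n,x)$ is nonzero, and $\rho_{G_n}(\{0\})=0$. Because the matching polynomials are monic with integer coefficients and uniformly bounded roots, the atom-convergence principle of Theorem~\ref{Luck} applies to the weakly convergent sequence $\rho_{G_n}$ and yields $\rho(\{0\})=\lim\rho_{G_n}(\{0\})=0$. Hence $f_1(\infty)=\int\mathbf{1}_{\{x\neq0\}}\,d\rho=1-\rho(\{0\})=1$, matching the fact that here $p^{*}=1$, so $f_1\colon[0,\infty)\to[0,1)$ is a continuous strictly increasing bijection. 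A monotone sequence $p(G_n,\cdot)$ converging pointwise to the continuous $f_1$ converges locally uniformly (P\'olya's theorem), and for a strictly increasing limit this forces convergence of inverses: writing $t_0=f_1^{-1}(p_0)$ and pinning $p_0$ strictly between $f_1(t_0\pm\varepsilon)$ gives $t(G_n,p_0)\in(t_0-\varepsilon,t_0+\varepsilon)$ for large $n$. Thus $t(G_n,p)\to t_L(p):=f_1^{-1}(p)$ for every $p\in[0,1)$.

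Finally part (d) is assembled from the pieces. Using $\lambda_{G_n}(p)=\frac{\ln M(G_n,t)}{|G_n|}\big|_{t=t(G_n,p)}-\tfrac12\,p\ln t(G_n,p)$, the locally uniform convergence $\frac{\ln M(G_n,\cdot)}{|G_n|}\to f_2$ combined with $t(G_n,p)\to t_L(p)$ yields $f_2(t_L(p))$ for the first term, while continuity of $\ln$ at the positive value $t_L(p)$ handles the second (the case $p=0$ is trivial, as $\lambda_{G_n}(0)=0$). Therefore $\lambda_{G_n}(p)\to f_2(t_L(p))-\tfrac12\,p\ln t_L(p)$. I expect the only genuine obstacle to be part (c): everything rests on preventing mass from concentrating at the atom at $0$ in the limit, which is exactly what the perfect-matching hypothesis delivers through the atom-convergence half of Theorem~\ref{wc}; without it $f_1$ would surject only onto $[0,1-\rho(\{0\}))$ and the inverse $t_L$, hence $\lambda_L$, would fail to be defined all the way up to $p=1$.
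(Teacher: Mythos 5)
Your proposal is correct and follows essentially the same route as the paper: the same integral representations of $p(G,t)$ and $\ln M(G,t)/|G|$ against the matching measure $\rho_G$, weak convergence via Theorem~\ref{wc} with Heilmann--Lieb providing a common compact support, and a monotonicity sandwich to invert $p(G_n,\cdot)$ for part (c). The only variations are that you invoke Theorem~\ref{Luck} to conclude $\rho(\{0\})=0$ and hence $\lim_{t\to\infty}p(L,t)=1$ (a step the paper passes over more quickly), and in part (d) you replace the paper's explicit equicontinuity estimate by P\'olya's theorem on locally uniform convergence of monotone functions to a continuous limit; both refinements are sound.
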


\begin{Rem}
In part (c), we used the extra condition to ensure that $p^{*}=1$ for all
these graphs. We mention that H. Nguyen and K. Onak \cite{ngu}, and
independently G. Elek and G. Lippner \cite{ele} proved that for a
Benjamini--Schramm convergent graph sequence $(G_{n})$, the following limit
exits:
\[
\lim_{n\to\infty}\frac{2\nu(G_{n})}{|G_{n}|}=\lim_{n\to\infty}p^{*}(G_{n}).
\]
In particular, one can extend part (c) to graph sequences without perfect
matchings. Since we are primarily interested in lattices with perfect matchings,
we leave it to the Reader.
\end{Rem}

To prove Theorem~\ref{entropy}, we essentially repeat an argument of the paper
\cite{ACFK}.

\begin{proof}
[Proof of Theorem~\ref{entropy}]First we prove part (a) and (b). For a graph
$G$ let $S(G)$ denote the set of zeros of the matching polynomial $\mu(G,x)$,
then
\[
M(G,t)=\prod_{%
%TCIMACRO{\QATOP{\lambda\in S(G) }{\lambda>0}}%
%BeginExpansion
\genfrac{}{}{0pt}{}{\lambda\in S(G) }{\lambda>0}%
%EndExpansion
}(1+\lambda^{2}t)=\prod_{\lambda\in S(G)}(1+\lambda^{2}t)^{1/2}.
\]
Then
\[
\ln M(G,t)=\sum_{\lambda\in S(G)}\frac{1}{2}\ln\left(  1+\lambda^{2}t\right)
.
\]
By differentiating both sides we get that
\[
\frac{M^{\prime}(G,t)}{M(G,t)}=\sum_{\lambda\in S(G)}\frac{1}{2}\frac
{\lambda^{2}}{1+\lambda^{2}t}.
\]
Hence
\[
p(G,t)=\frac{2t\cdot M^{\prime}(G,t)}{|G|\cdot M(G,t)}=\frac{1}{|G|}%
\sum_{\lambda\in S(G)}\frac{\lambda^{2} t}{1+\lambda^{2}t}=\int\frac{tz^{2}%
}{1+tz^{2}}\, d\rho_{G}(z).
\]
Similarly,
\[
\frac{\ln M(G,t)}{|G|}=\frac{1}{|G|}\sum_{\lambda\in S(G)}\frac{1}{2}%
\ln\left(  1+\lambda^{2}t\right)  =\int\frac{1}{2}\ln\left(  1+tz^{2}\right)
\, d\rho_{G}(z).
\]
Since $(G_{n})$ is a Benjamini--Schramm convergent sequence of bounded degree graphs, 
the sequence $(\rho_{G_{n}})$  weakly converges to some $\rho^{*}
$ by Theorem~\ref{wc}. Since both functions
\[
\frac{tz^{2}}{1+tz^{2}}\ \ \ \ \mbox{and}\ \ \ \ \frac{1}{2}\ln\left(
1+tz^{2}\right)
\]
are continuous, we immediately obtain that
\[
\lim_{n\to\infty}p(G_{n},t)=\int\frac{tz^{2}}{1+tz^{2}} \, d\rho^{*}(z),
\]
and
\[
\lim_{n\to\infty}\frac{\ln M(G_{n},t)}{|G_{n}|}=\int\frac{1}{2}\ln\left(
1+tz^{2}\right)  \, d\rho^{*}(z).
\]
Note that both functions
\[
\frac{tz^{2}}{1+tz^{2}}\ \ \ \ \mbox{and}\ \ \ \ \frac{1}{2}\ln\left(
1+tz^{2}\right)
\]
are strictly monotone increasing continuous functions in the variable $t$.
Thus their integrals are also strictly monotone increasing continuous
functions. \bigskip

To prove part (c), let us introduce the function
\[
p(L,t)=\int\frac{tz^{2}}{1+tz^{2}} \, d\rho^{*}(z).
\]
We have seen that $p(L,t)$ is a strictly monotone increasing continuous
function, and equals $\lim_{n\to\infty}p(G_{n},t)$. Since for all $G_{n}$,
$p^{*}(G_{n})=1$, we have $\lim_{t \to\infty}p(G_{n},t)=1$ for all $n$. This
means that $\lim_{t \to\infty}p(L,t)=1$. Hence we can consider inverse
function $t(L,p)$  which maps $[0,1)$ to $[0,\infty)$. We show that
\[
\lim_{n\to\infty}t(G_{n},p)=t(L,p)
\]
pointwise. Assume by contradiction that this is not the case. This means that
for some $p_{1}$, there exists an $\varepsilon$ and an infinite sequence
$n_{i}$ for which
\[
\left|  t(L,p_{1})-t(G_{n_{i}},p_{1})\right|  \geq\varepsilon.
\]
We distinguish two cases according to \newline(i) there exists an infinite
sequence $(n_{i})$ for which
\[
t(G_{n_{i}},p_{1})\geq t(L,p_{1})+\varepsilon,
\]
or (ii) there exists an infinite sequence $(n_{i})$ for which
\[
t(G_{n_{i}},p_{1})\leq t(L,p_{1})-\varepsilon.
\]
In the first case, let $t_{1}=t(L,p_{1})$, $t_{2}=t_{1}+\varepsilon$ and
$p_{2}=p(L,t_{2})$. Clearly, $p_{2}>p_{1}$. Note that
\[
t(G_{n_{i}},p_{1})\geq t(L,p_{1})+\varepsilon=t_{2}%
\]
and $p(G_{n_{i}},t)$ are monotone increasing functions, thus
\[
p(G_{n_{i}},t_{2})\leq p(G_{n_{i}},t(G_{n_{i}},p_{1}))=p_{1}=p_{2}%
-(p_{2}-p_{1})=p(L,t_{2})-(p_{2}-p_{1}).
\]
This contradicts the fact that
\[
\lim_{n\to\infty}p(G_{n_{i}},t_{2})=p(L,t_{2}).
\]
In the second case, let $t_{1}=t(L,p_{1})$, $t_{2}=t_{1}-\varepsilon$ and
$p_{2}=p(L,t_{2})$. Clearly, $p_{2}<p_{1}$. Note that
\[
t(G_{n_{i}},p_{1})\leq t(L,p_{1})-\varepsilon=t_{2}%
\]
and $p(G_{n_{i}},t)$ are monotone increasing functions, thus
\[
p(G_{n_{i}},t_{2})\geq p(G_{n_{i}},t(G_{n_{i}},p_{1}))=p_{1}=p_{2}%
+(p_{1}-p_{2})=p(L,t_{2})+(p_{1}-p_{2}).
\]
This again contradicts the fact that
\[
\lim_{n\to\infty}p(G_{n_{i}},t_{2})=p(L,t_{2}).
\]
Hence $\lim_{n\to\infty}t(G_{n},p)=t(L,p)$. \medskip

Finally, we show that $\lambda_{G_{n}}(p)$ converges for all $p$. Let
$t=t(L,p) $, and
\[
\lambda_{L}(p)=\lim_{n\to\infty}\frac{\ln M(G_{n},t)}{|G_{n}|}-\frac{1}{2}%
p\ln(t).
\]
Note that
\[
\lambda_{G_{n}}(p)=\frac{\ln M(G_{n},t_{n})}{|G_{n}|}-\frac{1}{2}p\ln(t_{n}),
\]
where $t_{n}=t(G_{n},p)$. We have seen that $\lim_{n\to\infty}t_{n}=t$. Hence
it is enough to prove that the functions
\[
\frac{\ln M(G_{n},u)}{|G_{n}|}%
\]
are equicontinuous. Let us fix some $u_{0}$ and let
\[
H(u_{0},u)=\max_{z\in[-2\sqrt{D-1},2\sqrt{D-1}]}\left|  \frac{1}{2}\ln\left(
1+u_{0}z^{2}\right)  -\frac{1}{2}\ln\left(  1+uz^{2}\right)  \right|  .
\]
Clearly, if $|u-u_{0}|\leq\delta$ for some sufficiently small $\delta$, then
$H(u_{0},u)\leq\varepsilon$, and
\[
\left|  \frac{\ln M(G_{n},u)}{|G_{n}|}-\frac{\ln M(G_{n},u_{0})}{v(G_{n}%
)}\right|  =\left|  \int\frac{1}{2}\ln\left(  1+u_{0}z^{2}\right)  \,
d\rho_{G_{n}}(z)-\int\frac{1}{2}\ln\left(  1+uz^{2}\right)  \, d\rho_{G_{n}%
}(z)\right|  \leq
\]
\[
\leq\int\left|  \frac{1}{2}\ln\left(  1+u_{0}z^{2}\right)  -\frac{1}{2}%
\ln\left(  1+uz^{2}\right)  \right|  \, d\rho_{G_{n}}(z)\leq\int H(u,u_{0})\,
d\rho_{G_{n}}(z)\leq\varepsilon.
\]
This completes the proof of the convergence of $\lambda_{G_{n}}(p)$.
\end{proof}

\begin{Def}
Let $L$ be an infinite lattice and $(G_{n})$ be a sequence of finite graphs
which is Benjamini--Schramm convergent to $L$. For instance, $G_{n}$ can be
chosen to be an exhaustion of $L$. Then the sequence of measures $(\rho
_{G_{n}})$ weakly converges to some measure which we will call $\rho_{L}$, the
matching measure of the lattice $L$. For $t>0$, we can introduce
\[
p(L,t)=\int\frac{tz^{2}}{1+tz^{2}}\, d\rho_{L}(z)
\]
and
\[
F(L,t)=\int\frac{1}{2}\ln\left(  1+tz^{2}\right)  \, d\rho_{L}(z)-\frac{1}%
{2}p(L,t) \ln(t).
\]
If the lattice $L$ contains a perfect matching, then we can choose $G_{n}$
such that all $G_{n}$ contain a perfect matching. Then $p(L,t)$ maps
$[0,\infty)$ to $[0,1)$ in a monotone increasing way, and we can consider its
inverse function $t(L,p)$. Finally, we can introduce
\[
\lambda_{L}(p)=F(L,t(L,p))
\]
for all $p\in[0,1)$. We will define $\lambda_{L}(1)$ as
\[
\lambda_{L}(1)=\lim_{p\nearrow1}\lambda_{L}(p).
\]

\end{Def}

\begin{Rem}
\label{Mayer} In the literature, the so-called Mayer series are computed for
various lattices $L$:
\[
p(L,t)=\sum_{n=1}^{\infty}a_{n}t^{n}%
\]
for small enough $t$. Let us compare it with
\[
p(L,t)=\int\frac{tz^{2}}{1+tz^{2}}\, d\rho_{L}(z)=\int\left(  \sum
_{n=1}^{\infty}(-1)^{n+1}z^{2n}t^{n}\right)  \, d\rho_{L}(z)=\sum
_{n=1}^{\infty}(-1)^{n+1}\left(  \int z^{2n}d\rho_{L}(z)\right)  t^{n}.
\]
Hence if we introduce the moment sequence
\[
\mu_{k}=\int z^{k}d\rho_{L}(z),
\]
we see that
\[
\mu_{2n}=\int z^{2n}d\rho_{L}(z)=(-1)^{n+1}a_{n}.
\]
Note that $\mu_{0}=1$ and $\mu_{2n-1}=0$ since the matching measures are
symmetric to $0$. Since the support of the measure $\rho_{L}$ lie in the
interval $[-2\sqrt{D-1},2\sqrt{D-1}]$, we see that the Mayer series converges
whenever $|t|<\frac{1}{4(D-1)}$. We also would like to point out that the
integral is valid for all $t>0$, while the Mayer series does not converge if
$t$ is 'large'.
\end{Rem}

\subsection{Computation of the monomer-dimer free energy}

The monomer-dimer free energy of a lattice $L$ is $\tilde{\lambda}(L)=F(L,1)$.
Its computation can be carried out exactly the same way as we proved its
existence: we use that
\[
\tilde{\lambda}(L)=F(L,1)=\int\frac{1}{2}\ln\left(  1+z^{2}\right)  \,d\rho_{L}(z).
\]
Assume that we know the moment sequence $(\mu_{k})$ for $k\leq N$. Then let us
choose a polynomial of degree at most $N$, which uniformly approximates the
function
\[
\frac{1}{2}\ln\left(  1+z^{2}\right)
\]
on the interval $[-2\sqrt{D-1},2\sqrt{D-1}]$, where $D$ is the coordination
number of $L$. A good polynomial approximation can be found by Remez's
algorithm. Assume that we have a polynomial
\[
q(z)=\sum_{k=0}^{N}c_{k}z^{k}%
\]
for which
\[
\left\vert \frac{1}{2}\ln\left(  1+z^{2}\right)  -q(z)\right\vert
\leq\varepsilon
\]
for all $z\in\lbrack-2\sqrt{D-1},2\sqrt{D-1}]$. Then
\[
\left\vert \tilde{\lambda}(L)-\int q(z)\,d\rho_{L}(z)\right\vert \leq\int\left\vert
\frac{1}{2}\ln\left(  1+z^{2}\right)  -q(z)\right\vert d\rho_{L}%
(z)\leq\varepsilon,
\]
and
\[
\int q(z)\,d\rho_{L}(z)=\sum_{k=0}^{N}c_{k}\mu_{k}.
\]
Hence
\[
\left\vert \tilde{\lambda}(L)-\sum_{k=0}^{N}c_{k}\mu_{k}\right\vert \leq\varepsilon.
\]
How can we compute the moment sequence $(\mu_{k})$? One way is to use its
connection with the Mayer series (see Remark~\ref{Mayer}). A good source of Mayer
series coefficients is the paper of P. Butera and M. Pernici \cite{bupe},
where they computed $a_{n}$ for $1\leq n\leq24$ for various lattices. (More
precisely, they computed $d_{n}=a_{n}/2$ since they expanded the function
$\rho(t)=p(t)/2$.) This means that we know  $\mu_{k}$ for $k\leq49$ for
these lattices. The other strategy to compute the moment sequence is to use
its connection with the number of closed walks in the self-avoiding walk tree. 

Since the moment sequence is missing for the honeycomb lattice (hexagonal lattice), we
computed the first few elements of the moment sequence for this lattice:
\medskip%

\[
1, 0, 3, 0, 15, 0, 87, 0, 543, 0, 3543, 0, 23817, 0, 163551, 0, 1141119, 0,
8060343, 0,
\]
\[
57494385, 0, 413383875, 0, 2991896721, 0, 21774730539, 0, 159227948055, 0,
\]
\[
1169137211487, 0, 8615182401087, 0, 63683991513351, 0, 472072258519041, 0,
\]
\[
3508080146139867, 0, 26127841824131313, 0, 194991952493587371, 0,
\]
\[
1457901080870060919, 0, 10918612274039599755, 0, 81898043907874542705
\]
\bigskip

The following table contains some numerical results. The bound on the error terms are rigorous.

\begin{center}%
\begin{tabular}[c]{|c|c|c|c|c|}\hline
Lattice & $\tilde{\lambda}(L)$ & Bound on error & $p(L,1)$ & Bound on error  \\\hline
2d & $0.6627989725$ & $3.72\cdot10^{-8}$ & $0.638123105$ &
$5.34\cdot10^{-7}$  \\\hline
3d & $0.7859659243$ & $9.89\cdot10^{-7}$ & $0.684380278$ &
$1.14\cdot10^{-5}$  \\\hline
4d & $0.8807178880$ & $5.92\cdot10^{-6}$ & $0.715846906$ &
$5.86\cdot10^{-5}$  \\\hline
5d & $0.9581235802$ & $4.02\cdot10^{-5}$ & $0.739160383$ &
$3.29\cdot10^{-4}$  \\\hline
6d & $1.0237319240$ & $1.24\cdot10^{-4}$ & $0.757362382$ &
$8.91\cdot10^{-4}$  \\\hline
7d & $1.0807591953$ & $3.04\cdot10^{-4}$ & $0.772099489$ &
$1.95\cdot10^{-3}$  \\\hline
hex &$0.58170036638$ & $1.56\cdot10^{-9}$ &$0.600508638$ &
$2.65\cdot10^{-8}$  \\\hline
\end{tabular}

\end{center}

\section{Density function of matching measures.}

\label{density}

It is natural problem to investigate the matching measure. One particular
question is whether it is atomless or not. In general, $\rho_{L}$ can contain
atoms. For instance, if $G$ is a finite graph then clearly $\rho_{G}$ consists
of atoms. On the other hand, it can be shown that for all lattices in Table 1,
the measure $\rho_{L}$ is atomless. We use the following lemmas. 

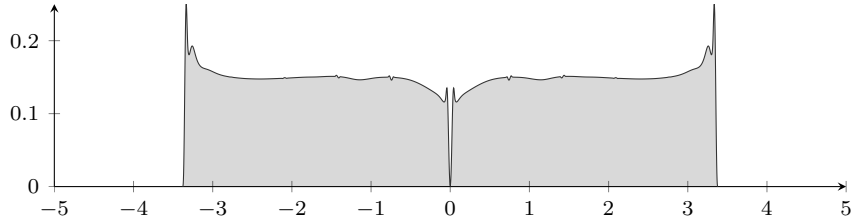
\begin{figure}[h!]
\tikzsetnextfilename{2dgridmm} \begin{tikzpicture}
\begin{axis}[small, height=4cm, width=12cm, xmin=-5, xmax=5, ymin=0, ymax=0.25, axis lines=left, /pgf/number format/fixed]
\addplot[black!80, fill=black, fill opacity=0.15] table {2dgridmm.dat};
\end{axis}
\end{tikzpicture}
\caption{An approximation for the matching measure of $\mathbb{Z}^{2}$,
obtained by smoothing the matching measure of the finite grid $C_{10}\times
P_{100}$ by convolution with a triweight kernel.}%
\label{fig:2dgridmm}%
\end{figure}

We will only need part (a) of the following lemma, we only give part (b) for
the sake of completeness.

\begin{Lemma}
\cite{god3,hei} \label{pathcover} (a) The maximum multiplicity of a zero of
$\mu(G,x)$ is at most  the number of vertex-disjoint paths required to
cover $G$. \medskip

\noindent(b) The number of distinct zeros of $\mu(G,x)$ is at least the length of 
the longest path in $G$.
\end{Lemma}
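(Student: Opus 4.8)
The plan is to fix a real number $\theta$ and track the integer $\mathrm{mult}(\theta,H)$, the multiplicity of $\theta$ as a root of $\mu(H,x)$ (so $\mathrm{mult}(\theta,H)=0$ when $\theta$ is not a root, and $\mathrm{mult}(\theta,\varnothing)=0$ since $\mu(\varnothing,x)=1$). All roots are real by Theorem~\ref{Hei}, so these multiplicities behave like eigenvalue multiplicities. I claim that part (a) reduces to the single assertion
\[
\mathrm{mult}(\theta,\,G-V(P))\ \ge\ \mathrm{mult}(\theta,G)-1
\]
for every path $P$ in $G$; call this $(\star)$. Indeed, granting $(\star)$, let $Q_1,\dots,Q_k$ be a vertex-disjoint path cover of $G$. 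Delete the $Q_i$ one at a time; since the paths are disjoint, each $Q_i$ survives as a path in the current graph, so by $(\star)$ the multiplicity of $\theta$ drops by at most $1$ at each step. After $k$ steps the graph is empty and the multiplicity is $0$, whence $\mathrm{mult}(\theta,G)-k\le 0$. Minimizing over path covers gives $\mathrm{mult}(\theta,G)\le$ (path cover number), which is exactly (a).

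The base case $p=1$ of $(\star)$, i.e. $|\mathrm{mult}(\theta,G)-\mathrm{mult}(\theta,G-v)|\le 1$, I would obtain from Theorem~\ref{saw-tree}. Rewriting $\mu(G-v,x)/\mu(G,x)=\mu(T_v(G)-v,x)/\mu(T_v(G),x)$ and applying Proposition~\ref{prop}(a) to the finite tree $T_v(G)$ and the forest $T_v(G)-v$ turns the right-hand side into $\phi(T_v(G)-v,x)/\phi(T_v(G),x)$, a diagonal entry of the resolvent of the symmetric matrix $A_{T_v(G)}$. By Cauchy interlacing (deletion of one row and column) the eigenvalues of $A_{T_v(G)-v}$ interlace those of $A_{T_v(G)}$, so the multiplicity of $\theta$ changes by at most $1$; as the two ratios agree, the same bound transfers to $G$. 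One can moreover pin down a vertex at which the multiplicity genuinely drops, via Proposition~\ref{prop}(b): since $\mathrm{mult}(\theta,\mu'(G,\cdot))=\mathrm{mult}(\theta,G)-1$ while $\mu'(G,x)=\sum_v\mu(G-v,x)$ with each summand of multiplicity $\ge \mathrm{mult}(\theta,G)-1$, some $G-v$ must have multiplicity exactly $\mathrm{mult}(\theta,G)-1$.

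The passage from a single vertex to a whole path, i.e. $(\star)$ itself, is where I expect the main obstacle. Naive induction along $P$ applies the $p=1$ bound once per vertex and loses a full unit each time, giving only the useless $\mathrm{mult}(\theta,G)-p$; the real content is that a \emph{connected} path is far more rigid than $p$ arbitrary vertices. This is transparent when $\theta=0$, where $\mathrm{mult}(0,H)=|H|-2\nu(H)$ with $\nu$ the matching number: a maximum matching of $G-V(P)$ extends, by appending the alternate edges $v_1v_2,v_3v_4,\dots$ of $P=v_1\cdots v_p$, to a matching of $G$, so $\nu(G)\ge \nu(G-V(P))+\lfloor p/2\rfloor$ and therefore $\mathrm{mult}(0,G)\le \mathrm{mult}(0,G-V(P))+1$. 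For general $\theta$ I would delete $v_1,v_2,\dots,v_p$ in order while exploiting the self-similar structure of the tree of self-avoiding walks: the diagonal resolvent entry that governs the multiplicity at the root obeys the continued-fraction (Schur-complement) recursion, and removing the successive path vertices descends one level along a single branch of this continued fraction. Since each diagonal resolvent entry is a Herglotz function with simple poles and positive residues, the order of a pole or zero at $\theta$ can never exceed $1$ along such a branch, and this positivity is precisely what stops the loss from accumulating and produces the uniform constant $+1$ in $(\star)$.

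With $(\star)$ established, the reduction in the first paragraph finishes part (a). Part (b) is complementary and is not needed for our application; it lies in the same interlacing circle of ideas (an induced path on many vertices forces correspondingly many distinct eigenvalues), and I would not pursue it here.
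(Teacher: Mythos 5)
The paper does not prove this lemma at all: it is quoted as a known result of Godsil and Heilmann--Lieb (\cite{god3,hei}), so there is no in-paper proof to compare against. Judged on its own terms, your reduction of (a) to the single claim $(\star)$ --- that deleting the vertex set of one path lowers $\mathrm{mult}(\theta,\cdot)$ by at most $1$ --- is exactly right, and it is the classical route: peel off the disjoint paths of a cover one at a time, losing at most one unit of multiplicity per path, and land on the empty graph. Your one-vertex base case is also fine (interlacing of $\mu(G-v,x)$ with $\mu(G,x)$, whether obtained through $T_v(G)$ and the Herglotz property of a diagonal resolvent entry, or directly from Heilmann--Lieb), and the $\theta=0$ computation via $\mathrm{mult}(0,H)=|H|-2\nu(H)$ is a correct and pleasant sanity check.

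However, the step you yourself flag as ``the main obstacle'' is a genuine gap, and it is not a small one: $(\star)$ for a path with $p\ge 2$ vertices \emph{is} the theorem, and the paragraph you offer does not prove it. The observation that each diagonal resolvent entry $\phi(T-v,x)/\phi(T,x)$ is Herglotz with simple poles and zeros controls one deletion at a time; it does not control the \emph{cumulative} drop of $\mathrm{mult}(\theta, G-v_1-\cdots-v_j)$ as $j$ grows, because after deleting $v_1$ the relevant quantity is the multiplicity in a new graph, not the residue structure of a deeper level of the same continued fraction. Saying that ``positivity stops the loss from accumulating'' is a restatement of the desired conclusion, not an argument. The actual proof (Godsil, \emph{Algebraic Combinatorics}, Ch.~6) needs real additional input: the identity
\[
\mu(G\setminus u,x)\,\mu(G\setminus v,x)-\mu(G,x)\,\mu(G\setminus\{u,v\},x)\;=\;\sum_{Q}\mu(G\setminus V(Q),x)^2,
\]
the sum running over all $u$--$v$ paths $Q$ in $G$, together with an induction that splits into cases according to whether deleting the first vertex of $P$ lowers, preserves, or raises the multiplicity (the ``$\theta$-essential / $\theta$-positive / $\theta$-neutral'' trichotomy). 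The sign information carried by the sum of squares is what rules out a second drop further along the path. Without some version of this, $(\star)$ is unproved and your argument for (a) is incomplete; part (b), which you set aside, is likewise left unestablished, though the paper only uses (a).
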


The following lemma is a deep result of C. Y. Ku and W. Chen \cite{ku}.

\begin{Lemma}
\label{transitive} \cite{ku} If $G$ is a finite connected vertex transitive
graph, then all zeros of the matching polynomial are distinct.
\end{Lemma}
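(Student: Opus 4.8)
The plan is to fix an arbitrary root $\theta$ of $\mu(G,x)$, write $m(\theta,G)$ for its multiplicity as a root of the matching polynomial, and show that $m(\theta,G)=1$. The argument splits into two stages. First, I would use vertex transitivity together with Proposition~\ref{prop}(b) to show that $G$ is \emph{$\theta$-critical}, meaning that $m(\theta,G-v)=m(\theta,G)-1$ for \emph{every} vertex $v$. Second, I would invoke the analogue of the Gallai--Edmonds structure theorem for the matching polynomial to conclude that a connected $\theta$-critical graph must have $m(\theta,G)=1$.

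For the first stage the computation is short and self-contained. By Proposition~\ref{prop}(b) we have $\mu'(G,x)=\sum_{u\in V(G)}\mu(G-u,x)$. Since $G$ is vertex transitive, for any two vertices $u,w$ there is an automorphism carrying $u$ to $w$, and it induces an isomorphism $G-u\cong G-w$; because the matching polynomial is a graph invariant, all summands $\mu(G-u,x)$ coincide, so $\mu'(G,x)=|G|\,\mu(G-v,x)$ for any fixed vertex $v$. Now if $\theta$ is a root of $\mu(G,x)$ of multiplicity $m\ge 1$, then it is a root of $\mu'(G,x)$ of multiplicity exactly $m-1$. Dividing by $|G|$, we get $m(\theta,G-v)=m-1$ for every $v$, which is precisely the $\theta$-critical condition.

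The second stage is the crux and the main obstacle. I would model it on the special case $\theta=0$, where $m(0,G)=|G|-2\nu(G)$ and the $\theta$-critical condition is exactly the factor-critical condition: a factor-critical graph has a near-perfect matching, so $\nu(G)=(|G|-1)/2$ and $m(0,G)=1$. For a general root $\theta$ there is no matching to point to, and one needs the root-multiplicity theory of the matching polynomial. The key inputs are Godsil's interlacing bound $|m(\theta,G)-m(\theta,G-v)|\le 1$, which follows from his identification (Theorem~\ref{saw-tree}) of $\mu(G-v,x)/\mu(G,x)$ with the corresponding ratio on the self-avoiding walk tree, together with a Parter--Wiener type analysis of how the set of $\theta$-essential vertices behaves under vertex deletion. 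The Gallai--Edmonds analogue of Ku and Chen \cite{ku} organizes this into a decomposition in which each connected $\theta$-critical block contributes multiplicity exactly one; I would cite this structure theorem and deduce that our connected $\theta$-critical $G$ satisfies $m(\theta,G)=1$, so that $\theta$ is a simple root. As $\theta$ was an arbitrary root, all roots of $\mu(G,x)$ are distinct.

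I expect the genuine difficulty to lie entirely in the second stage. The reduction to $\theta$-criticality is immediate from the derivative identity, but proving that connectedness forces multiplicity one requires the full Parter--Wiener / Gallai--Edmonds machinery for matching-polynomial roots; it does not collapse to a short induction, because deleting a vertex destroys vertex transitivity and may destroy connectedness and the critical property as well. One tempting shortcut that I would consider, but do not expect to work in general, is to bound $m(\theta,G)$ by the minimum number of vertex-disjoint paths covering $G$ via Lemma~\ref{pathcover}(a): this yields $m(\theta,G)=1$ only when $G$ has a Hamiltonian path, and whether every connected vertex-transitive graph admits one is a famous open problem, so this route cannot substitute for the structure theorem.
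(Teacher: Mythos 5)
The paper offers no proof of this lemma at all --- it is quoted verbatim as ``a deep result of C.~Y.~Ku and W.~Chen \cite{ku}'' --- and your outline is essentially the argument of that reference: the reduction via $\mu'(G,x)=|G|\,\mu(G-v,x)$ (from Proposition~\ref{prop}(b) and transitivity), showing that $\theta$ drops in multiplicity by exactly one at every vertex, is correct and complete, while the remaining step is precisely the Ku--Chen Gallai--Edmonds analogue, which you, like the paper, cite rather than reprove. Your observation that the path-cover bound of Lemma~\ref{pathcover}(a) cannot substitute for that machinery, since it would require every connected vertex-transitive graph to have a Hamiltonian path, is also correct and well taken.
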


Now we are ready to give a generalization of Theorem~\ref{atomless}.

\begin{Th}
\label{atomless-general} Let $L$ be a lattice satisfying one of the following
conditions. \medskip

\noindent(a) The lattice $L$ can be obtained as a Benjamini--Schramm limit of
a finite graph sequence $G_{n}$ such that $G_{n}$ can be covered by
$o(|G_{n}|)$ disjoint paths. \medskip

\noindent(b) The lattice $L$ can be obtained as a Benjamini--Schramm limit of
connected vertex transitive finite graphs.
\medskip

Then the matching measure $\rho_{L}$ is atomless.
\end{Th}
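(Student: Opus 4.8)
The plan is to reduce the atomlessness of $\rho_L$ to a uniform bound on the multiplicities of the zeros of the matching polynomials $\mu(G_n,x)$, and then to transport this bound to the limit using the atom-convergence statement of Theorem~\ref{wc}. The key observation is that since $\rho_{G_n}$ is by definition the uniform distribution on the $|G_n|$ zeros (counted with multiplicity) of $\mu(G_n,x)$, for every real $x$ we have
\[
\rho_{G_n}(\{x\}) = \frac{\operatorname{mult}_x \mu(G_n,x)}{|G_n|},
\]
where $\operatorname{mult}_x$ denotes the multiplicity of $x$ as a root. Thus to show $\rho_L(\{x\}) = 0$ for every $x$, by Theorem~\ref{wc} it suffices to show $\rho_{G_n}(\{x\}) \to 0$, and for this it is enough to bound $\operatorname{mult}_x \mu(G_n,x)$ by $o(|G_n|)$ uniformly in $x$.

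For part (a), I would invoke Lemma~\ref{pathcover}(a): the multiplicity of any zero of $\mu(G_n,x)$ is at most the number of vertex-disjoint paths needed to cover $G_n$. By hypothesis this number is $o(|G_n|)$, so $\rho_{G_n}(\{x\}) \le o(|G_n|)/|G_n| = o(1)$, with the bound uniform in $x$. Letting $n \to \infty$ and applying the identity $\lim_{n} \rho_{G_n}(\{x\}) = \rho_L(\{x\})$ from Theorem~\ref{wc} gives $\rho_L(\{x\}) = 0$ for all $x$, hence $\rho_L$ is atomless. For part (b) the argument is even cleaner: by Lemma~\ref{transitive}, if each $G_n$ is connected and vertex transitive then all $|G_n|$ zeros of $\mu(G_n,x)$ are simple, so $\operatorname{mult}_x \mu(G_n,x) \le 1$ and $\rho_{G_n}(\{x\}) \le 1/|G_n| \to 0$ for every $x$. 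Again Theorem~\ref{wc} yields $\rho_L(\{x\}) = 0$, so $\rho_L$ has no atoms.

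The only genuinely delicate point is that \emph{weak} convergence alone does not control atoms: a sequence of purely atomic measures can converge weakly to a measure carrying an atom, and conversely atoms can dissolve in the limit. What rescues the argument is precisely the stronger conclusion $\lim_{n} \rho_{G_n}(\{x\}) = \rho_L(\{x\})$ of Theorem~\ref{wc} (which itself rests on Thom's theorem, Theorem~\ref{Luck}, exploiting that the matching polynomials are monic with integer coefficients). This is the tool that lets a pointwise-vanishing bound on the atom sizes of the $\rho_{G_n}$ pass to the limit measure. Once that ingredient is in hand, both cases follow immediately from the respective multiplicity bounds of Lemmas~\ref{pathcover} and~\ref{transitive}, and no further estimates are required; the substantive work has all been front-loaded into Theorem~\ref{wc} and the two lemmas.
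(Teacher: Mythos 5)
Your proposal is correct and follows essentially the same route as the paper: both arguments combine the pointwise atom convergence $\lim_n \rho_{G_n}(\{x\}) = \rho_L(\{x\})$ (from Theorem~\ref{wc}, resting on Thom's theorem) with the multiplicity bounds of Lemma~\ref{pathcover}(a) for case (a) and Lemma~\ref{transitive} for case (b). Your explicit remark that weak convergence alone would not suffice is a useful clarification, but the substance of the argument is identical to the paper's.
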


\begin{proof}
We prove the two statements together. Let $\mbox{mult}(G_{n},\theta)$ denote
the multiplicity of $\theta$ as a zero of $\mu(G_{n},x)$. Then by
Theorem~\ref{Luck} we have
\[
\rho_{L}(\{\theta\})=\lim_{n\rightarrow\infty}\frac{\mbox{mult}(G_{n},\theta
)}{|G_{n}|}.
\]
Note that by Lemma~\ref{pathcover} we have $\mbox{mult}(G_{n},\theta)$ is at
most the number of paths required to cover the graph $G_{n}$. In case of
connected vertex transitive graphs $G_{n}$, we have $\mbox{mult}(G_{n}%
,\theta)=1$ by Lemma~\ref{transitive}. This means that in both cases $\rho
_{L}(\{\theta\})=0$.
\end{proof}

\begin{proof}
[Proof of Theorem~\ref{atomless}]Note that $\mathbb{Z}^{d}$ satisfies both
conditions of Theorem~\ref{atomless-general} by taking boxes or using part (b),
taking toroidal boxes.
\end{proof}

\begin{center}

\begin{figure}[h!]
\tikzsetnextfilename{3dgridmm} \begin{tikzpicture}
\begin{axis}[small, height=4cm, width=12cm, xmin=-5, xmax=5, ymin=0, ymax=0.25, axis lines=left, /pgf/number format/fixed]
\addplot[black!80, fill=black, fill opacity=0.15] table {3dgridmm.dat};
\end{axis}
\end{tikzpicture}
\caption{An approximation for the matching measure of $\mathbb{Z}^{3}$.
Working with reasonably sized finite grids would have been computationally too
expensive, so this time we took the $L_{2}$ projection of the infinite measure
to the space of degree 48 polynomials which can be calculated from the
sequence of moments.}%
\label{fig:3dgridmm}%
\end{figure}
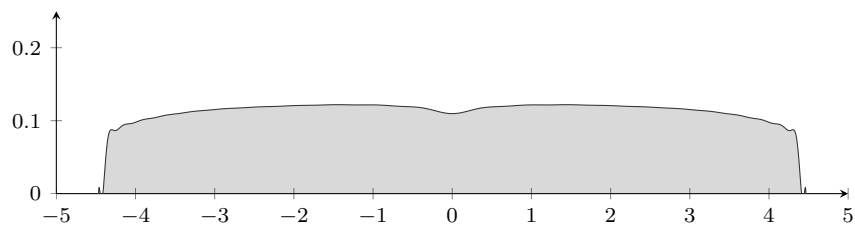
\end{center}

\end{document}